\journal{arXiv}
\numberwithin{equation}{section}
\theoremstyle{plain} \newtheorem{theorem}{Theorem}[section]
\theoremstyle{plain} \newtheorem{proposition}[theorem]{Proposition}
\theoremstyle{plain} \newtheorem{lemma}[theorem]{Lemma}
\theoremstyle{plain} 
\theoremstyle{definition} 
\theoremstyle{definition} 
\theoremstyle{remark} 
\theoremstyle{remark} \newtheorem{remarks}[theorem]{Remarks}
\theoremstyle{remark} \newtheorem{algorithm2}[theorem]{Algorithm}
\theoremstyle{remark}
\newcommand{\E}{\mathbb{E}}
\newcommand{\PP}{\mathbb{P}}
\newcommand{\R}{\mathbb{R}}
  \def\thefnote{\myfnsymbol{fnote}}}
\def\myfnsymbol#1{\expandafter\@myfnsymbol\csname c@#1\endcsname}
\def\@myfnsymbol#1{\ifcase #1\or ***\else \@ctrerr\fi}
\def\fntext[#1]#2{\g@addto@macro\@fnotes{%
   \refstepcounter{fnote}\elsLabel{#1}%
   \def\thefootnote{\thefnote}
   \global\setcounter{footnote}{\c@fnote}%
   \footnotetext{#2}}}
\begin{document}

\begin{frontmatter}

\title{{\bf \small Analytic and Monte Carlo Approximations to the Distribution of the First Passage Time of the Drifted Diffusion with Stochastic Resetting and Mixed Boundary Conditions}
}

\author[1,2]{\small Juan Magalang \corref{cor1}}
\author[1,3]{\small Riccardo Turin \corref{cor1}}
\author[4]{\small Javier Aguilar \corref{cor2}}
\author[1]{\small Laetitia Colombani}
\author[2]{\small Daniel Sanchez-Taltavull \fnref{foot1}}
\author[1]{\small Riccardo Gatto \fnref{foot1}}

\affiliation[1]{organization={Institute of Mathematical Statistics and Actuarial Science, University of Bern},
            addressline={Alpeneggstrasse 22}, 
            city={Bern},
            postcode={3012},
            country={Switzerland}}

\affiliation[2]{organization={Department of Visceral Surgery and Medicine, Inselspital, Bern University Hospital, University of Bern},
            addressline={Murtenstrasse 35}, 
            city={Bern},
            postcode={3008},
            country={Switzerland}}

\affiliation[3]{organization={Group Risk Management, Swiss Re Management Ltd},
            addressline={Mythenquai 50/60}, 
            city={Zurich},
            postcode={8022},
            country={Switzerland}}

\affiliation[4]
{organization={Laboratory of Interdisciplinary Physics, Department of Physics and Astronomy
“G. Galilei”, University of Padova},
            city={Padova},
            country={Italy}}
            
\fntext[foot1]{Scientific group leaders}
\cortext[cor1]{Equal contribution}
\cortext[cor2]{Corresponding author: {\tt javier.aguilarsanchez@unipd.it}}

\begin{abstract}
{\small
This article introduces two techniques for computing the distribution of the absorption or first passage time 
of the drifted Wiener diffusion subject to Poisson resetting times, to an upper hard wall barrier and to a lower absorbing barrier. The first method,
which we call ``Pad\'e-partial fraction'' approximation, starts with the Pad\'e approximation to the Laplace transform of the 
first passage time, which is then exactly inverted by means of the partial fraction decomposition. The second method, which we call
``multiresolution algorithm'', is a Monte Carlo technique that exploits the properties of the Wiener process in order to generate Brownian bridges at increasing levels of resolution. 
Our numerical study reveals that the multiresolution algorithm has higher efficiency than 
standard Monte Carlo,
whereas the faster Pad\'e-partial fraction method
is accurate in various circumstances and provides
an analytical formula.
Also, a closed-form exact expression for the expected 
first passage time is derived.}
\\

\end{abstract}

\begin{keyword}
absorbing barrier 
\sep Brownian bridge 
\sep Euler-Maruyama scheme
\sep Fokker-Planck equation \sep hard wall reflection \sep Laplace transform  
\sep Pad\'e approximation 
\sep partial fractions decomposition
\sep Poisson resetting times 
\sep Wiener process.
\PACS 02.50.-r \sep 05.10.Gg \sep 05.10.Ln \sep 05.40.-a \sep 02.30.Mv \sep 02.60.-x \sep 02.50.Ng
\MSC 41A58 \sep 60G18 \sep 60G40 \sep 60J65 \sep 65C05 \sep 35G15
\end{keyword}

\end{frontmatter}

\tableofcontents

\newpage

\section{Introduction}

Drift-diffusion processes represent a cornerstone in the mathematical modeling of systems whose evolution includes stochastic components. 
The central one is Brownian motion and it finds frequent applications in most scientific fields, such as physics~\cite{Kampen_2007, bian2016111, libchaber2019biology, Romanczuk2012}, biology~\cite{Romanczuk2009, Woods2014, Marcovitz2013, Oshanin2009},
insurance mathematics~\cite{dufresne1991risk,gatto2016saddlepoint, gatto2012four} and it builds the basis of mathematical finance.
Through this work, we will focus on a drifted and bounded Brownian motion for which the particle returns to its original position at random times following the Poisson process. Such stochastic resetting has gained importance in the past decade \cite{Majumdar2015, evans_diffusion_2011, chechkin2018searchresettingunifiedrenewal,evans_stochastic_2020, pal_first_2019, roldan_path-integral_2017,nagar_diffusion_2016,pal_diffusion_2016,Kusmierz2014,Meylahn2015}. Particularly, in the context of biology, stochastic resetting has been used to model hunting behavior in animals, where animals return to specific sites to look for food \cite{mercado2018lotka}. At the cellular level, this scheme can model cellular focal adhesions \cite{bressloff2020stochastic}. At the biomolecular level, it has been applied to model backtrack recovery in RNA polymerization \cite{roldan2016stochastic}.  In our previous work, we have shown that stochastic resetting can be used to model the role of changes in therapies to palliate drug resistance development \cite{ramoso_stochastic_2020}. As new models and experimental realizations of processes with stochastic resetting continue to emerge~\cite{besga2020optimal,tal2020experimental}, there is a growing need for more refined approximation and simulations techniques to comprehensively characterize real-world resetting protocols~\cite{evans_stochastic_2020}, which we tackle to some extent in this study.

By means of the Laplace transform and of the
underlying stochastic differential equation (SDE) \cite{pavliotis_stochastic_2014, karlin_taylor_1981},
we are able to derive an analytic and a Monte Carlo
technique for computing the distribution of the 
first passage time (FPT) to the null level of the 
process~\cite{redner2001guide,iyer2016first,aguilar2023endemic,zhang2016first,Chicheportiche2014,Masoliver2014,Maggiore2014,Barkai2014}.
With other methods existing in the literature the mean FPT can be obtained analytically. However, this first moment does not provide sufficient information the entire FPT distribution: quantities such as standard deviation, median, and upper quantiles are often relevant in applications to biology and medicine, for example. 

In this context, this article introduces two computational methods for obtaining the probability distribution of the FPT of the drifted Brownian motion subject to Poisson resetting times and to upper hard wall barrier. The first of these two techniques makes use of the Laplace transform of the FPT~\cite{ramoso_stochastic_2020}. It is difficult to find a general and accurate method for inverting Laplace transforms.
Moreover, most methods are purely numerical. Accurate analytical approximation formulae are however useful in various situations, for example for computing sensitivities of the approximated probabilities.
Our method uses the Pad\'e approximation and partial fractions decomposition to approximate the Laplace transform inversion. We call it Pad\'e-partial fraction (PPF) approximation.
Besides high computational speed, it provides a simple closed-form
expression for the distribution of the FPT.
The second method is a Monte Carlo algorithm that
exploits a bridge property of the Wiener process, which
allows to obtain trajectories at increasing level of detail \cite{Davies1979, asmussen2007stochastic}. We call it multiresolution algorithm (MRA), following the terminology of wavelet
analysis. The MRA allows for very high accuracy.
We introduce two versions of the MRA: the standard MRA (SMRA), which directly exploits the bridge property of the Wiener process to our model, and the hybrid MRA (HMRA), which starts with the classical Euler-Maruyama algorithm to generate the initial approximation of the FPT and which improve it further by using the MRA. 

This article has the following structure. 
We first derive the Laplace transform of the FPT (Section \ref{s2}). Then
we introduce the PPF approximation (Section \ref{sec:approxinv}). 
This section presents also a simple
expression of the mean FPT.
Next, we present the SMRA and HMRA
(Section \ref{sec:multires}). These methodological
sections are followed by an intensive numerical
study, aiming to show the accuracy of our techniques
(Section \ref{num}).
A short summary of the methodological and numerical
results followed by a discussion on future research 
concludes the article (Section \ref{sec:discussion}).
\\

\noindent {\bf Notation} \\
\noindent
$\bullet$ $f^{(k)}(x)= (d/dx)^k f(x)$, for $k=0,1,\ldots$, where $f: \R \to \R$. \\
\noindent
$\bullet$ $\partial_t{f} (t,x) = (\partial / \partial t)
f(t,x)$,
$f^\prime (t,x)= (\partial / \partial x)
f(t,x)$ and $f^{\prime\prime} (t,x) = (\partial / \partial x)^2
f(t,x)$,
where 
$f: [0,\infty) \times \R \to \R$.
\\
\noindent $\bullet$
    $\tilde{f}(s)
	=\int_0^\infty e^{-st} f(t)\,dt$
 is the Laplace transform of $f:[0,\infty) \to \R$.
 \\
 \noindent $\bullet$ $\delta(x)$ is the Dirac delta function, which assigns mass 1 at $x=0$ and is null $\forall x \neq 0$. \\
\noindent
$\bullet$ $X \sim Y$ means that the random variables $X$
 and $Y$ have same distribution. \\
 \noindent $\bullet$ $f(x)=o\left(g(x)\right)$, as $x\to a$, means that $\lim_{x\to a} \frac{f(x)}{g(x)}=0$, where $f,g: \R \to \R$.\footnote{It is mplicitely assumed that
 $g$ is nonnull over some neighborhood of $a$, if $|a| < \infty$, or for all extreme large (small) values, if $a=\infty$ ($-\infty$).}

\section{Laplace transform of propagator and mean FPT}                 \label{s2}
The distribution of FPTs is often the essential 
element in the study of absorption phenomena. This distribution crucially depends on the drift-diffusion processes, the number of dimensions of the diffusion space, and the boundary conditions~ \cite{redner2001guide, yustelindenberg, dyesguerra, voiturezbenichou}. Of particular interest are problems with absorbing or reflecting boundaries. Absorbing boundaries are regions in the diffusion space where diffusing particles can enter but cannot leave~\cite{cox65theory}. Reflecting boundaries are regions in which diffusing particles cannot permeate \cite{Skorokhod1961}. In the following, we will focus on problems with mixed boundary conditions, meaning that the process is bounded between one absorbing and one reflecting boundary.

Obtaining the closed-form expression for the distribution of the FPT is in general difficult, because it involves solving a Fokker-Planck equation with absorbing boundaries~\cite{redner2001guide, dyesguerra, voiturezbenichou}. Nonetheless, we are still able to characterize the FPT through the Laplace transform. In this section, we derive the Laplace transform of the FPT distribution and obtain a novel formula for its expectation.
Starting with the process without resetting, we obtain the Laplace transform of the propagator by solving the Laplace transform of the Fokker-Planck equation with boundary conditions \cite{cox65theory} (Section \ref{subsec:2-1}). From the propagator, we obtain the Laplace transform of the FPT distribution \cite{redner2001guide, ramoso_stochastic_2020} that will relate to the process with resetting through the survival function \cite{pal_first_2019, roldan_path-integral_2017} (Section \ref{subsec:2-2}). Finally, by using the aforementioned expressions, we show that the expected FPT with resetting can be expressed in terms of the Laplace transform of the FPT distribution without resetting.

\subsection{Laplace transform of the propagator}\label{subsec:2-1}

The underlying stochastic process is the basic
Brownian motion or Wiener process with drift, here denoted $Y=\{Y_t\}_{t \ge 0}$, which solves the SDE 
\begin{equation}
d Y_t=v\,dt+\sqrt{2D}\,d W_t,\; \forall t\geq 0\,,
 \label{eq:sde}
\end{equation}
with fixed initial condition $Y_0=x_0\in(0,1]$, where the drift 
$v$ and the volatility $D$ are respectively real
and nonnegative real constants.
We then impose that the paths of this process are bounded between $0$ and $1$ and use the same name $Y$ for the bounded process. The level 1 is a \emph{reflecting boundary of type hard wall}~ \cite{redner2001guide}. On the one hand, the definition of this reflecting boundary at the level of the Fokker-Planck equation requires the concept of probability current and will be given in Eq.~\eqref{eq:bc}. On the other, the action of the reflecting boundary at the level of sample path will be provided along with its discretization scheme in Section~\ref{subsubsec:4-1-1}. We refer to~\cite{pilipenko2014introduction} for a detailed discussion of SDEs with reflecting boundaries.

We note that this process is neither the formal reflection of a trajectory, as would be the absolute value of a diffusion, nor the regulated Brownian motion. These two cases are described in the introduction of \cite{harrisonBrownianMotionStochastic1985}. 
Regarding the null level, it is an \emph{absorbing state} and our
goal is precisely to evaluate the probability of reaching this state.

Our prior work \cite{ramoso_stochastic_2020} showcases the practical 
relevance of the model in describing biological phenomena. Specifically, we utilized a drift-diffusion process with mixed boundary conditions in order to model drug resistance development resulting from mutation -- a stochastic process biased towards the survival of the infecting pathogen~\cite{MANRUBIA2006, Pillay1998}. Our approach employed a bounded drift-diffusion process to quantify changes in therapy efficacy against the mutating infecting pathogen. The reflecting boundary represented perfect therapy, while the absorbing boundary denoted drug failure, since infecting pathogens develop complete resistance to therapies due to mutation~\cite{Clutter2016}.

Denote by $p(\cdot,t)$
the probability density of $Y_t$, $\forall t >0$,
and $p(\cdot,0)= \delta(\cdot-x_0)$, called \emph{propagator}.
Let $x \in \R$ and $t \ge0$.
The {\it forward Kolmogorov} or {\it Fokker-Planck equation} determines
the probability distribution of the process $Y$ and it is given by the PDE
\begin{equation}
\label{eq:fke}
	\partial_t{p}(x,t)+vp'(x,t)-Dp''(x,t)=0.
\end{equation}
We refer e.g. to \cite{cox65theory}
for the construction of Eq. (\refeq{eq:fke}).

The propagator is indeed a defective probability density, viz.
with total mass below 1, because it does not account for the probability mass at the absorption state.
This reflects the physical interpretation of $Y$ as the location of a particle moving between
two boundaries that as soon as touches the absorbing boundary gets immediately removed from
the system.

Let us further introduce the \emph{probability current}
\begin{equation} \label{eq:probcurrent}
	J(x,t)=vp(x,t)- Dp'(x,t)\,.
\end{equation}
The two boundary conditions that are  
absorption at 0
hard wall at 1 can be weakly defined in terms of the propagator and the probability current in the following way,
\begin{equation}
\label{eq:bc}
	p(0,t) = 0 \quad \text{and} \quad J(1,t)=0,
\end{equation}
respectively;
see e.g. Chapter 4 of \cite{pavliotis_stochastic_2014} or \cite{redner2001guide}.

Thus we study the solution of the following system of equations, 
\begin{align} \label{eq:system_of_p}
\left\{
				\begin{array}{ll}
					\partial_t{p}(x,t)+vp'(x,t)-Dp''(x,t)=0 , \\
    p(0,t) = 0, \\  J(1,t)=0, \\
    p(x,0) = \delta(x-x_0),
				\end{array}
				\right.
\end{align}
in which the last equation provides the initial condition.
No closed-form solution to Eqs. \eqref{eq:system_of_p} is available. Lemma \ref{lemma:propagator} 
provides a closed-form expression for
the Laplace transform of the propagator, viz. for
$\tilde{p}(x,s)
	=\int_0^\infty e^{-st}p(x,t) \,dt$, $\forall s\geq 0$.
\begin{lemma}[Laplace transform of propagator]
\label{lemma:propagator}
Let $x_0 \in (0,1]$, $v \in \mathbb R$, $D>0$ and 
consider $p$ solution to Eqs. \eqref{eq:system_of_p}.
Let $s> - v^2/(4D)$ and define 
\begin{align}                           \label{eq:constants}
\rho & = \dfrac{v}{2D},
\quad \omega(s)=\sqrt{v^2+4 D s}, \quad
\theta(s)=\dfrac{\omega(s)}{2D}=\dfrac{\sqrt{v^2+4 D s}}{2D}
\end{align}
and
\begin{equation} \label{eq:roots}
	\alpha_\pm(s)
	=\frac{v\pm\sqrt{v^2+4Ds}}{2D}
	=\frac{v\pm\omega(s)}{2D}
    =\rho\pm\theta(s).
\end{equation}
Then the Laplace transform of the propagator 
is given by
\begin{align}\label{eq:p_below}
	\tilde{p}(x,s) & = \tilde{p}(x_0,s) \times
     \begin{cases}
	\frac{e^{\alpha_+(s) x}-e^{\alpha_-(s) x}}
	{e^{\alpha_+(s) x_0}-e^{\alpha_-(s) x_0}}, &
	\forall x\in[0,x_0),   \\
	\frac{\alpha_+(s) \, e^{\alpha_+(s) (x-1)} - \alpha_-(s) \, e^{\alpha_-(s) (x-1)}}
	{\alpha_+(s) \, e^{\alpha_+(s) (x_0-1)} - \alpha_-(s) \, e^{\alpha_-(s) (x_0-1)}}, & \forall x\in(x_0,1],  
\end{cases}
\end{align}
with $\tilde{p}(x_0, s)$ given by
\begin{align} \label{eq:px_0}
	\tilde{p}(x_0, s)
	&=\frac{2 \, \mathrm{sinh}(\theta x_0)\left\{\omega\,\mathrm{cosh}[\theta(x_0-1)]+v\,\mathrm{sinh}[\theta(x_0-1)]\right\}}{\omega^2 \,\mathrm{cosh}(\theta)-v\,\omega\,\mathrm{sinh}(\theta)}\, . 
\end{align}
\end{lemma}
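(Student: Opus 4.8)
\medskip
\noindent\textbf{Proof plan.}
The plan is to Laplace-transform the Fokker--Planck problem \eqref{eq:system_of_p} in the time variable, which turns it into a linear two-point boundary value problem for an ODE in $x$ carrying a Dirac source at $x_0$, and then to solve that problem explicitly by matching at $x_0$.

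First I would apply $\int_0^\infty e^{-st}(\,\cdot\,)\,dt$ to the PDE. Since $p(\cdot,t)$ is a defective density it is bounded and $e^{-st}p(x,t)\to 0$ as $t\to\infty$, so $\tilde p(x,s)$ is finite for $s\ge 0$, hence also for $s>-v^2/(4D)$ by analytic continuation in $s$; integrating by parts, $\int_0^\infty e^{-st}\partial_t p(x,t)\,dt=s\tilde p(x,s)-\delta(x-x_0)$. The transformed system becomes
\[
D\tilde p''(x,s)-v\tilde p'(x,s)-s\tilde p(x,s)=-\delta(x-x_0),\qquad x\in(0,1),
\]
together with $\tilde p(0,s)=0$ and $\tilde J(1,s)=v\tilde p(1,s)-D\tilde p'(1,s)=0$.

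Next I would solve the homogeneous equation away from $x_0$: its characteristic polynomial is $D\lambda^2-v\lambda-s$, whose roots are exactly the $\alpha_\pm(s)$ of \eqref{eq:roots}, and the hypothesis $s>-v^2/(4D)$ guarantees $\omega(s)>0$, so these roots are real and distinct. On $[0,x_0)$ the condition $\tilde p(0,s)=0$ forces $\tilde p(x,s)=A\bigl(e^{\alpha_+(s)x}-e^{\alpha_-(s)x}\bigr)$. On $(x_0,1]$ I would use the elementary identity $v-D\alpha_\pm(s)=D\alpha_\mp(s)$, immediate from $\alpha_+(s)+\alpha_-(s)=v/D$, to evaluate the current of a general solution and conclude that $\tilde J(1,s)=0$ is equivalent to $\tilde p(x,s)=C\bigl(\alpha_+(s)e^{\alpha_+(s)(x-1)}-\alpha_-(s)e^{\alpha_-(s)(x-1)}\bigr)$. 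Continuity of $\tilde p$ at $x_0$ then expresses $A$ and $C$ as $\tilde p(x_0,s)$ divided by the respective denominators appearing in \eqref{eq:p_below}, which gives the stated ratio form; the remaining constant $\tilde p(x_0,s)$ is pinned down by the jump relation $D\bigl[\tilde p'(x_0^+,s)-\tilde p'(x_0^-,s)\bigr]=-1$, obtained by integrating the ODE over a vanishing neighbourhood of $x_0$.

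Finally, inserting the one-sided derivatives of the two pieces into the jump relation yields $\tilde p(x_0,s)$ as an explicit combination of $\alpha_\pm(s)$ and of exponentials evaluated at $x_0$ and $x_0-1$. The remaining work is to recast this into the hyperbolic form \eqref{eq:px_0} using $\alpha_\pm(s)=\rho\pm\theta(s)$, $\alpha_+(s)+\alpha_-(s)=v/D$, $\alpha_+(s)\alpha_-(s)=-s/D$, $\alpha_+(s)-\alpha_-(s)=\omega(s)/D$, together with $e^{\theta a}+e^{-\theta a}=2\cosh(\theta a)$ and $e^{\theta a}-e^{-\theta a}=2\sinh(\theta a)$. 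I expect this last simplification — factoring out the common $e^{\rho x_0}$ that cancels between numerator and denominator and collapsing the exponential products into $\cosh$ and $\sinh$ of $\theta$, $\theta x_0$ and $\theta(x_0-1)$ — to be the only genuinely delicate step; everything preceding it is structural. As sanity checks I would verify $\tilde p(x,s)\ge 0$, that $\tilde p(x_0,s)$ reduces to the free-space Green's function value $1/\omega(s)$ when the two boundaries are sent to infinity, and that for $v=0$, $x_0=1$ it equals $\tanh(\sqrt{s/D})/\sqrt{Ds}$, matching the eigenfunction expansion in that special case.
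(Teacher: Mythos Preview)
Your proposal is correct and follows essentially the same route as the paper: Laplace-transform the Fokker--Planck system to obtain the second-order ODE with a $\delta$-source, solve the homogeneous problem on each side of $x_0$ using the characteristic roots $\alpha_\pm$, impose $\tilde p(0,s)=0$ and $\tilde J(1,s)=0$ (via the identity $v-D\alpha_\pm=D\alpha_\mp$), then match by continuity and the derivative jump $D[\tilde p'_<(x_0)-\tilde p'_>(x_0)]=1$ to extract $\tilde p(x_0,s)$, and finally rewrite in hyperbolic form. The only differences are cosmetic---the paper does not include your limiting sanity checks, and its final algebraic reduction to \eqref{eq:px_0} is written out step by step rather than summarized.
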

The ratio $\rho$ is currently called {\it P\'eclet number}.

\subsection{Stochastic resetting and stopping condition} \label{subsec:2-2}

We now modify the dynamics of the stochastic process $Y$ by putting
stochastic resetting.
More explicitly, we assume that at random times 
$T_0 = 0 < T_1 < T_2 < \ldots$, a.s., the value of the process $Y$
is reset to its initial value $x_0$.
Following \cite{evans_stochastic_2020},
this is expressed through the addition of a new term to the SDE,
giving 
\begin{equation} \label{eq:sde_with_resets}
	dX_t=(1-\chi_t) \cdot (v\,dt+\sqrt{2D}\,dW_t) +\chi_t \cdot (x_0-X_t),\quad \forall t\geq 0,
\end{equation}
where
\begin{equation*}
	\chi_t =\sum_{n=1}^\infty \mathbf{1} \{T_n = t\}, \quad \forall t \ge 0,
\end{equation*}
where 
$\mathbf{1}$ denotes the indicator.
We assume that the stochastic process $\chi=\{\chi_t\}_{t\ge 0}$ is an independent homogeneous Poisson process with rate or intensity $r>0$. Thus
$T_n$ the sum of $n$ independent exponential random variables with expectation $r^{-1}$, for
$n=1,2\ldots$.

We can now define the 
FPT (also called absorption time) by 
\begin{equation}\label{eq:fpt_rv}
	\tau_r=\inf\{t\geq 0\, | \, X_t\leq0\}.
\end{equation}
It admits a proper probability density function denoted $f_r$,
where the subscript $r \ge 0$ highlights the 
dependence on the Poisson rate $r$.
When $r=0$, we retrieve the dynamic without resetting.
 
\section{Laplace transform of FPT distribution and PPF} \label{sec:approxinv}
Although the mean FPT $\E[\tau_r]$ properly characterizes typical absorption times, this quantity alone does not quantify the uncertainty inherent to the first passage phenomenon. Often, high-order quantiles are more relevant.
Thus, 
besides the expectation
we want to obtain the entire probability 
distribution of $\tau_r$.
Its Laplace transform
is available, but there is no obvious and 
general way of inverting it.
The well-known fast Fourier transform (FFT) is a purely numerical
method and it does not necessarily provide accurate results, in particular for approximating
upper tail quantiles, that are useful in many applications.
We refer for example to 
\cite{gatto2012four} for
a numerical comparison of methods 
for computing a FPT probability 
for the compound Poisson process perturbed by diffusion.

In this section, we propose and implement a particular method for our FPT problem. We obtain the Laplace transform of the FPT 
(Section \ref{subsec:3-1}), PPF approximation whose inversion will be approximated numerically. The approximated inversion begins with the Pad\'e approximation of the Laplace transform, followed by a partial fraction decomposition of the Pad\'e approximation and then by the simple inversion of the sum of partial fractions (Section \ref{subsec:3-2}). 
This is the PPF
approximation. One can find
 some references, in particular in the engineering literature, on the problem of obtaining approximate Laplace inversions by rational approximations: some early references are \cite{Davies1979}, \cite{luke1961approximate}, \cite{ longman1973generation}, \cite{akin1969application} and \cite{luke196977ie}.

\subsection{Laplace transform of FPT distribution with resetting} \label{subsec:3-1}

In this section we first provide the Laplace transform of the FPT distribution, 
$\tau_r$, and then we give a simplified formula for its expectation. 

\begin{proposition}[Laplace transform of FPT distribution with resetting] \label{prop:laplaceres}
    Assume that $X$ solves the SDE with resetting of Eq. \eqref{eq:sde_with_resets}, with $X_0 = x_0 \in (0,1]$, $v \in \mathbb R$, $D>0$ and $r>0$. 
    Denote by $f_r$ the probability density of the absorption time $\tau_r$, defined in Eq. \eqref{eq:fpt_rv}. Then its Laplace transform is given by, $\forall s >-r$,
    \begin{align} \label{eq:tildef}
\tilde{f}_r(s) &
    = \frac{(s+r)\tilde{f}_0(s+r)}{s+r\tilde{f}_0(s+r)} \nonumber \\
    & 
    {\textstyle
    =\frac{(s+r) \left\{ \omega(s+r) \, \mathrm{cosh}[\theta(s+r)(x_0-1)]+v \, \mathrm{sinh}[\theta(s+r)(x_0-1)]\right\}}
    {s e^{\rho x_0}\left\{ \omega(s+r) \, \mathrm{cosh}[\theta(s+r)]-v \, \mathrm{sinh}[\theta(s+r)] \right\}
    + r \left\{ \omega(s+r) \, \mathrm{cosh}[\theta(s+r)(x_0-1)]+v \, \mathrm{sinh}[\theta(s+r)(x_0-1)] \right\} }},
\end{align}
where $\tilde{f}_0$ is given by, $\forall s> -v^2/(4D)$,
\begin{equation} \label{eq:tildef0}
	\tilde{f}_0(s)=
	e^{-\rho x_0}\frac{\omega(s) \, \mathrm{cosh}\{\theta(s)(x_0-1)\}+v \, \mathrm{sinh}\{\theta(s)(x_0-1)\}}
	{\omega(s) \, \mathrm{cosh}\{\theta(s)\}-v \, \mathrm{sinh}\{\theta(s)\}}\,.
\end{equation}
\end{proposition}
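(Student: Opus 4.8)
The plan is to combine a renewal identity for the survival function under resetting with the explicit propagator obtained in Lemma~\ref{lemma:propagator}. First I would read off $\tilde f_0$, the Laplace transform of the density of $\tau_0$ (the process without resetting). Writing $Q_0(t)=\Prob{\tau_0>t}=\int_0^1 p(x,t)\,dx$ and using the Fokker--Planck equation \eqref{eq:fke} in the form $\partial_t p=-\partial_x J$ together with the boundary conditions \eqref{eq:bc}, one gets $Q_0'(t)=J(0,t)-J(1,t)=J(0,t)$, hence $f_0(t)=-Q_0'(t)=-J(0,t)=D\,p'(0,t)$ since $p(0,t)=0$. Taking Laplace transforms and differentiating at $x=0$ the branch of \eqref{eq:p_below} valid for $x\in[0,x_0)$ gives
\begin{equation*}
\tilde f_0(s)=D\,\tilde p'(0,s)=D\,\tilde p(x_0,s)\,\frac{\alpha_+(s)-\alpha_-(s)}{e^{\alpha_+(s)x_0}-e^{\alpha_-(s)x_0}}.
\end{equation*}
Using $\alpha_+(s)-\alpha_-(s)=2\theta(s)$, $e^{\alpha_+(s)x_0}-e^{\alpha_-(s)x_0}=2e^{\rho x_0}\sh{\theta(s)x_0}$ and $2D\theta(s)=\omega(s)$, and then substituting \eqref{eq:px_0} and cancelling the common factors $\sh{\theta x_0}$ and $\omega$, yields exactly \eqref{eq:tildef0}.

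Next I would set up the first-renewal equation for $Q_r(t)=\Prob{\tau_r>t}$. Conditioning on the first reset time $T_1\sim\mathrm{Exp}(r)$ and using the strong Markov property together with the fact that a reset returns the process to its starting point $x_0$, the process survives up to time $t$ in exactly two disjoint ways: either no reset occurs in $[0,t]$, with probability $e^{-rt}$, and the unreset process survives, with probability $Q_0(t)$; or the first reset occurs at some $\tau\in(0,t]$, with density $re^{-r\tau}$, the unreset process survives up to $\tau$, with probability $Q_0(\tau)$, and the freshly reset process survives the remaining time $t-\tau$, with probability $Q_r(t-\tau)$. Hence
\begin{equation*}
Q_r(t)=e^{-rt}Q_0(t)+r\int_0^t e^{-r\tau}Q_0(\tau)\,Q_r(t-\tau)\,d\tau.
\end{equation*}
Applying the Laplace transform, the convolution theorem and the shift rule $\int_0^\infty e^{-st}e^{-rt}g(t)\,dt=\tilde g(s+r)$ turns this into $\widetilde Q_r(s)=\widetilde Q_0(s+r)+r\,\widetilde Q_0(s+r)\,\widetilde Q_r(s)$, so that $\widetilde Q_r(s)=\widetilde Q_0(s+r)/\bigl(1-r\widetilde Q_0(s+r)\bigr)$ on the range where both transforms are finite, namely $s>-r$.

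Finally I would pass from survival functions back to densities. Since $x_0>0$ we have $Q_r(0)=1$, so integration by parts gives $\tilde f_r(s)=1-s\widetilde Q_r(s)$ and likewise $\tilde f_0(s)=1-s\widetilde Q_0(s)$, i.e. $\widetilde Q_r(s)=(1-\tilde f_r(s))/s$ and $\widetilde Q_0(s)=(1-\tilde f_0(s))/s$. Substituting these into the renewal identity and simplifying the compound fraction gives
\begin{equation*}
\tilde f_r(s)=\frac{(s+r)\tilde f_0(s+r)}{s+r\tilde f_0(s+r)},
\end{equation*}
which is the first line of \eqref{eq:tildef}; plugging the closed form \eqref{eq:tildef0} for $\tilde f_0(s+r)$ into this and clearing the common factor $e^{-\rho x_0}/\bigl(\omega(s+r)\ch{\theta(s+r)}-v\sh{\theta(s+r)}\bigr)$ from numerator and denominator produces the explicit expression on the second line.

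I expect the main obstacle to be making the renewal step fully rigorous: justifying the use of the strong Markov property at $T_1$, arguing that after the first reset the process restarts from $x_0$ with its original law so that the residual survival probability is exactly $Q_r(t-\tau)$, and checking finiteness of the relevant Laplace transforms so that the algebraic manipulations are legitimate on $\{s>-r\}$ (which, for $\tilde f_0(s+r)$ to coincide with the object of Lemma~\ref{lemma:propagator}, also uses $s+r>-v^2/(4D)$). The flux identity $f_0(t)=D\,p'(0,t)$ likewise merits a careful derivation from the continuity equation and the absorbing boundary condition; once these ingredients are in place, the remainder is elementary bookkeeping with hyperbolic functions.
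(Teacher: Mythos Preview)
Your proposal is correct and follows essentially the same route as the paper: the renewal identity $S_r(t)=e^{-rt}S_0(t)+r\int_0^t e^{-r\tau}S_0(\tau)S_r(t-\tau)\,d\tau$ is Laplace-transformed to obtain $\tilde S_r(s)=\tilde S_0(s+r)/\bigl(1-r\tilde S_0(s+r)\bigr)$, then combined with $\tilde f_r(s)=1-s\tilde S_r(s)$ and $\tilde S_0(s)=(1-\tilde f_0(s))/s$; the formula for $\tilde f_0$ is read off from the propagator via the flux identity $\tilde f_0(s)=D\tilde p'(0,s)$. The only cosmetic difference is that the paper places the renewal computation and the derivation of $\tilde f_0$ in the appendix proof of Proposition~\ref{p22} and then quotes those identities here, whereas you derive everything in one pass.
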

\begin{proof}
    We obtain from Eq.~\eqref{eq:kr} the Laplace transform of $f_r$ as
$\tilde{f}_r(s)=1-s\tilde{S}_r(s)$, for all $s \in \mathbb{R}$.
We combine it with Eq.~\eqref{eq:tildeSr} and it yields the first 
expression,
\begin{align*}
\forall s > -r, 	\; \tilde{f}_r(s)=\frac{(s+r)\tilde{f}_0(s+r)}{s+r\tilde{f}_0(s+r)}
 \; .
 \end{align*}
 This last expression holds $\forall s < -r$ under the assumption $ \tilde{f}_0(s+r)\neq -s/r$.  

We deduce the closed-form expression 
 in Eq.~\eqref{eq:tildef}
 by combining the first expression with $\tilde f_0$ in Eq.~\eqref{eq:tildef0}. Note that, for the function $\omega$ to be defined, we need $s+r \in \left(-v^2/(4D),\infty \right)$, viz. $s \in \left(- r -v^2/(4D), -r \right)$. 
\end{proof}

We can make two short remarks on Proposition
\ref{prop:laplaceres}. First, because the Laplace
transform exists over a neighborhood of the 
origin, all moments of $\tau_r$ exist and
determine its distribution unambiguously.
In contrast with this, the FPT of the driving drifted Brownian motion without hard wall reflecting boundary and resetting
has infinite moments:
it is known that although the FPT of the Brownian
motion is finite with probability one, its 
expectation is infinite~\cite{redner2001guide}.

We also note the distribution of the model without resetting
is obtained continuously, because
$\lim_{r \to 0} \tilde{f}_r(s) = \tilde{f}_0(s)$,
$\forall s>0$.

We end this section with the following novel closed-form expression for the
\emph{mean absorption time} with resetting.
\begin{proposition}[Mean FPT]       \label{p22}
Assume that the process $X$ solves the SDE with resetting of Eq.~\eqref{eq:sde_with_resets}, with $X_0 = x_0 \in (0, 1]$, 
$v \in \mathbb R$, $D>0$ and $r>0$. The expectation of $\tau_r$ in Eq.~\eqref{eq:fpt_rv} is given by
\begin{equation}
	\mathbb{E} [\tau_r]
	=\frac{1}{r}\left(\frac{e^{\rho x_0} \, (\omega(r) \, \mathrm{cosh}\{\theta(r)\}-v \, \mathrm{sinh}\{\theta(r)\})}{\omega(r) \, \mathrm{cosh}\{\theta(r)(x_0-1)\}+v \, \mathrm{sinh}\{\theta(r)(x_0-1)\}}-1\right),
\end{equation} 
where $\rho$, $\omega$ and $\theta$ are given in Eq.~\eqref{eq:constants}. 
\label{prop:mfpt}
\end{proposition}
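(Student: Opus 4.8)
The plan is to obtain $\E[\tau_r]$ as minus the first derivative at the origin of the Laplace transform $\tilde f_r$ computed in Proposition~\ref{prop:laplaceres}. Indeed, since $r>0$, Proposition~\ref{prop:laplaceres} guarantees that $\tilde f_r(s)=\int_0^\infty e^{-st}f_r(t)\,dt$ is finite for every $s$ in the interval $(-r,\infty)$, in particular for some $s_0<0$; by the standard fact that a Laplace transform finite at a negative abscissa is infinitely differentiable to the right of it, with derivatives obtained by differentiating under the integral sign, we get $\tilde f_r'(0)=-\int_0^\infty t\,f_r(t)\,dt=-\E[\tau_r]$, which is in particular finite. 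So the whole task reduces to evaluating $\tilde f_r'(0)$.

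First I would use the compact form $\tilde f_r(s)=\dfrac{(s+r)\,\tilde f_0(s+r)}{s+r\,\tilde f_0(s+r)}$ from the first line of Eq.~\eqref{eq:tildef}, writing $u(s)=(s+r)\,\tilde f_0(s+r)$ for the numerator and $w(s)=s+r\,\tilde f_0(s+r)$ for the denominator. The key observation is that $u(0)=w(0)=r\,\tilde f_0(r)$, so that $\tilde f_r(0)=1$ (consistent with $f_r$ being a proper density) and the quotient rule collapses to $\tilde f_r'(0)=\dfrac{u'(0)-w'(0)}{w(0)}$. Computing $u'(0)=\tilde f_0(r)+r\,\tilde f_0'(r)$ and $w'(0)=1+r\,\tilde f_0'(r)$, the two $r\,\tilde f_0'(r)$ terms cancel, leaving $\tilde f_r'(0)=\dfrac{\tilde f_0(r)-1}{r\,\tilde f_0(r)}$. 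This uses $\tilde f_0(r)>0$, which holds because $\tilde f_0(r)=\int_0^\infty e^{-rt}f_0(t)\,dt$ with $f_0\ge 0$ not identically zero. Hence $\E[\tau_r]=-\tilde f_r'(0)=\dfrac1r\!\left(\dfrac{1}{\tilde f_0(r)}-1\right)$.

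It then remains to substitute the closed form of $\tilde f_0$ from Eq.~\eqref{eq:tildef0} evaluated at $s=r$, namely $\tilde f_0(r)=e^{-\rho x_0}\dfrac{\omega(r)\cosh\{\theta(r)(x_0-1)\}+v\sinh\{\theta(r)(x_0-1)\}}{\omega(r)\cosh\{\theta(r)\}-v\sinh\{\theta(r)\}}$, take its reciprocal, and read off the announced expression for $\E[\tau_r]$ in terms of $\rho$, $\omega$, $\theta$.

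The main (and essentially only) delicate point is the analytic justification that $\E[\tau_r]=-\tilde f_r'(0)$ — that $\tau_r$ has a finite mean and that differentiation under the integral sign is legitimate — which follows at once from the fact that, for $r>0$, $\tilde f_r$ is finite on $(-r,\infty)$. Everything after that is the short algebraic manipulation above, whose only noteworthy feature is the cancellation making the answer depend on $\tilde f_0(r)$ alone (and not on $\tilde f_0'(r)$).
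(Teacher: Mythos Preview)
Your proof is correct, but it reaches the key identity $\E[\tau_r]=\dfrac1r\Bigl(\dfrac{1}{\tilde f_0(r)}-1\Bigr)$ by a different route than the paper. The paper works through the survival function: it writes $\E[\tau_r]=\tilde S_r(0)$, invokes the renewal equation $S_r(t)=e^{-rt}S_0(t)+r\int_0^t e^{-ru}S_0(u)S_r(t-u)\,du$, takes Laplace transforms to obtain $\tilde S_r(s)=\dfrac{1-\tilde f_0(s+r)}{s+r\tilde f_0(s+r)}$, and evaluates at $s=0$; it then re-derives $\tilde f_0$ from the probability current. You instead take Proposition~\ref{prop:laplaceres} as already established and simply differentiate $\tilde f_r(s)=\dfrac{(s+r)\tilde f_0(s+r)}{s+r\tilde f_0(s+r)}$ at the origin, exploiting the cancellation $u(0)=w(0)$ to avoid ever needing $\tilde f_0'(r)$. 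Your argument is shorter and logically cleaner within the paper's own dependency structure (since Proposition~\ref{prop:laplaceres} precedes Proposition~\ref{p22}), while the paper's approach is more self-contained and makes the role of the renewal structure explicit---which is arguably the conceptual point, since the same renewal identity underlies the proof of Proposition~\ref{prop:laplaceres} itself.
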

The proof of Proposition ~\ref{p22} is in Appendix \ref{ap:proof}.

\subsection{PPF approximation to FPT distribution with resetting} \label{subsec:3-2}
Because we cannot invert the Laplace transform 
$\tilde{f}_r$ in
Eq.~\eqref{eq:tildef} analytically, we now
propose to approximate $\tilde{f}_r$ by 
a specific rational function, which is
the ratio of two polynomials with degree in denominator
higher than in numerator. The type of rational approximation
considered here is the one suggested by 
Henri Pad\'e thesis in 1892 and called
Pad\'e approximation. It 
has shown practical relevance in many problems of
theoretical physics where power series expansions occur,
as already stressed by \cite{baker1961pade}. 
Introductions can be found in various books, such as in 
Section 4.6 of \cite{mathews2004numerical}, in which it is mentioned
that for given computational time, the Pad\'e approximation
is typically more accurate than the Taylor approximation.
It has a local error of order smaller than the sum of the two
degrees of the two polynomials of the ratio. We note that available tables do not provide the analytical form of the Laplace inverse of a Taylor approximation and, under some assumptions, they provide the Laplace inverse of many rational functions. 

    Let $m,n$ be nonnegative integers, let $I$ be an interval of $\mathbb{R}$ with $0 \in I$ and let $g : I \rightarrow \mathbb{R}$ be a $n+m$ times differentiable function.
 Then the Pad\'e approximation of orders 
 $m$ and $n$ to $g$ is the rational function 
    \begin{equation}
    \label{eq:pade}
	g_{m,n}(s)=\frac{p_m(s)}{q_n(s)},
 \end{equation}
 where
 $$p_m(s) = \sum_{j=0}^m a_j s^j \text{ ,and } 
 q_n(s) = \sum_{j=0}^n b_j s^j, \quad \forall s \in I,$$
 and 
 which satisfies $g^{(k)}(0) = g_{m,n}^{(k)}(0)$,
    for $k = 0, \ldots, n+m$.\footnote{Note that these same
    equations hold for the Taylor approximation
of order $m+n$.}
It can be shown that
these conditions imply that 
the coefficients $a_0, \ldots, a_m$ and $b_0,\ldots,b_{n-1}$ are 
obtained by solving the $m+n+1$ linear equations
\begin{align}                       \label{coeff}
    \left\{
    \begin{aligned}
        &\sum_{j=0\vee (i-n)}^i \frac{g^{(j)}(0)}{j!} b_{i-j} = a_i, \text{ for } i=0,\dots,m,
        \\
        &\sum_{j=0\vee (i-n)}^i \frac{g^{(j)}(0)}{j!} b_{i-j} = 0, \text{ for } i=m+1,\dots,m+n.
    \end{aligned}
    \right.
\end{align}
It can also be shown that
\[
    g(s)-g_{m,n}(s)=o(s^{m+n}),\quad\text{as}\quad s\to 0\,.
\]

The PPF considers the restriction $m<n$. It
allows for exact inversion of the Laplace transform in Eq. 
(\ref{eq:tildef})
approximated by Pad\'e and re-expressed in terms of partial 
fractions.
This leads to a simple and practical approximate 
closed-form expression for the 
density of the FPT $\tau_r$. Precisely,
let $g_{m,n}(s) =p_m(s)/q_n(s) $ be the 
Pad\'e approximation of orders $m<n$ to the Laplace transform $\tilde{f}_r$. If
{\it all roots of $q_n$ are real and negative}, then there exists a Laplace inversion of $g_{m,n}$, which we denote $f_{m,n}$. 
Thus, $f_{m,n}$ provides an approximation to the density of $\tau_r$, namely to $f_r$ on $[0, \infty)$.
Thus the density $f_{m,n}$ can be obtained by the following steps, under the
above condition on roots of $q_n$.
\begin{algorithm2}[PPF approximation to FPT density]
    \label{prop:Pade_approximant}
    $\;$ \\
\noindent \textbf{Step 1.} Pad\'e approximation
\begin{itemize}
\item
Select the orders $m<n$ of the Pad\'e approximation 
to $\tilde{f}_r$, which we denote $g_{m,n}$.
\item Compute $\tilde{f}_r^{(k)}(0)$, for $k = 0, \ldots , n+m$.
\item Find 
the coefficients $a_0, \ldots, a_m$ and $b_0,\ldots,b_{n-1}$
by solving the $m+n+1$ linear equations in Eqs.
(\ref{coeff}).
\end{itemize}
\noindent \textbf{Step 2.} Partial fraction decomposition
\begin{itemize}
\item
Decompose the Pad\'e approximation 
in the partial fractions as follows,
\begin{equation}
\label{eq:pfd}
g_{m,n}(u) = 
	 \frac{p_m(u)}{q_n(u)}=\sum_{j=1}^k \sum_{i=1}^{l_j} \frac{\gamma_{j,i}}{(u-\alpha_j)^i},
\end{equation}
where $\alpha_1, \ldots, \alpha_k$ are the distinct roots of the denominator $q_n$, whose multiplicities respectively are 
$l_1, \ldots, l_k$ (with $l_1 + \ldots + l_k = n$).
\item Compute the real coefficients $\gamma_{j,i}$, 
for $j = 1, \ldots, k$ 
and $i = 1, \ldots, l_j$.
Typically, $q_n$ has $n$ distinct roots $\alpha_1,\dots,\alpha_n$.
In this case,
\begin{equation*}
	 \frac{p_m(u)}{q_n(u)}=\sum_{j=1}^n \frac{\gamma_{j}}{u-\alpha_j},
\end{equation*}
gives us
\begin{equation*}
    \gamma_j=\frac{p_m(\alpha_j)}{\prod_{k=1, k\neq j}^n (\alpha_j-\alpha_k)}\,,\quad \text{ for } j=1,\dots, n\,. 
\end{equation*}

Should not all roots of $q_n$ be distinct, the coefficients $\gamma_{j,i}$, for $j = 1, \dots, k$ 
and $i = 1, \dots, l_j$, could be obtained by 
the residue method.
\end{itemize}
\noindent \textbf{Step 3.} Inversion \\
This step is possible only under the restriction $\alpha_1,\ldots,\alpha_k<0$. 

\begin{itemize}
\item Invert the Pad\'e approximation 
$g_{m,n}$ to $\tilde{f}_r$ by exploiting its re-expression in Eq. (\ref{eq:pfd}) so to obtain the linear combination of gamma or
exponential densities
\begin{equation} \label{eq:approxform}
    f_{m,n}(t)=\sum_{j=1}^k \sum_{i=1}^{l_j} \frac{\gamma_{j,i}}{i!} t^{i-1} e^{\alpha_j t}\,,\quad \forall t> 0.
\end{equation}
\end{itemize}
\noindent \textbf{Step 4.} Corrections \\
The following corrections to Eq.
\eqref{eq:approxform} generally improve its accuracy. 
In order to keep notation simple, the same name $f_{m,n}$ is
used for the original approximation of Eq.
\eqref{eq:approxform}
and for the corrected version.
\begin{itemize}
\item {\it Truncation of negative parts}\\
Negative values are equated to null, viz.
$$f_{m,n}(t) = \max \{f_{m,n}(t), 0\}, \; \forall t >0.$$
\item {\it Smoothing near to origin} \\
Oscillations near to the origin are removed.
Find $s>0$ the abscissa point of the last local minimum of $f_{m,n}$.
If it does not exist, then no correction is required.
Consider 
\begin{equation}                                \label{cat}
f_{m,n}(t) = 
\begin{cases}
    0,          & \text{if } t \le s, \\
    f_{m,n}(t), & \text{if } t>s.
\end{cases}
\end{equation} 
\item {\it Normalization} \\
Give integral value one, viz.
consider $$
f_{m,n}(t) = 
\left( \int_0^\infty f_{m,n}(s) ds \right)^{-1} f_{m,n}(t), \; \forall t >0.$$
\item {\it Recentering to expectation} \\
Give expected value $\mu = \mathbb{E} [\tau_r]$.  Compute $\mu$,
through its closed-form expression
in Eq.~\eqref{prop:mfpt}, and compute $\mu_{m,n}$, 
the mean of the approximate density $f_{m,n}$. Obtain
    the recentered approximated FPT density by 
    $$f_{m,n}(t) = f_{m,n}(t + \mu_{m,n} - \mu), \; \forall t > 
     \max\{0,\mu - \mu_{m,n}\}.
    $$
\end{itemize}
\end{algorithm2}

Let us now provide some further and more precise justifications to the
PPF Algorithm \ref{prop:Pade_approximant}.
The PPF approximation $f_{m,n}$ to the FPT density $f_r$ is
implicitly defined through its Laplace inversion formula
$g_{m,n}(u)=\int_0^\infty e^{-ut} f_{m,n}(t)\ dt$,
    $\forall u \ge 0$.
If $\alpha_1,\ldots,\alpha_k<0$, we have
\begin{equation*}
    \int_0^\infty \frac{t^{i-1}}{i!}  e^{(\alpha_j-u)t}\ dt=\frac{1}{(u-\alpha_j)^i},\;\forall  u > \max\{\alpha_1,\ldots,\alpha_k\},
\end{equation*}
for $i= 1, \dots, l_j$ and 
    $j=1,\dots, k$, 
    This and the partial fractions decomposition 
    of $g_{m,n}$
    in Eq.~\eqref{eq:pfd}
give Eq. (\refeq{eq:approxform}), in Step 3.

Next, the two corrections regarding
truncation to nonnegative values and
smoothing near the origin
concern only small neighborhoods of the origin.
The corrections are due to the fact that the PPF approximation can display
undesirable oscillations 
over these neighborhoods. However, we know that the true FPT density must vanish at the origin. 
Indeed, the process starts above zero. In addition to this,
repeated Monte Carlo experiments have confirmed that the FPT 
density is smooth close to the origin and overall unimodal.
Based on this theoretical or empirical evidence, 
the above smoothing correction assumes that the domain of the 
density is cut in two parts: the region where it increases from null
to the maximum of the density,
followed by the region where it decreases to null.
Accordingly, we correct near to the origin 
by finding the set of points $t>0$ such that $f_{m,n}'(t) = 0$.
We assume that the largest value in this set is mode of the 
approximate density and so
all other values of the set must be local extrema. If this set has one element only, then no correction is needed.
Otherwise, we denote $t^\ddag$ to be 
the second largest value in this set (which is either the last local minimum before the mode of or the last point of inflection): we remove
the non-monotonic part around the origin through Eq. (\ref{cat}). 

We conclude with the next remarks.
\begin{remarks}
\begin{itemize}
\item[1.]
As mentioned, $f_{m,n}$ is not necessarily a probability density: as $\gamma_1,\ldots,\gamma_n$ can be negative, $f_{m,n}$ can be negative over some regions.
However,
\begin{align*}
    \int_0^{\infty} f_{m,n}(t) dt &= \sum_{j=1}^k \sum_{i=1}^{l_j} \frac{\gamma_{j,i}}{(-\alpha_j)^i} 
    = g_{m,n} (0) \longrightarrow \tilde{f}_r(0) = 1 , 
    \text{ as } m,n \rightarrow \infty,
\end{align*}
so the sequence of PPF approximations has the
correct normalization in the limit of large $m,\,n$. 
\item[2.]
If the PPF approximation $f_{m,n}$ is a proper probability density, then its expectation is given by
\begin{align*}
\mu_{r,m,n}=
    \sum_{j=1}^k \sum_{i=1}^{l_j} \frac{(i+1) \gamma_{j,i}}{(-\alpha_j)^{i+1}}.
\end{align*}
\item[3.] If the function $f_{m,n}$
has a Laplace transform of the form of the Pad\'e
approximation Eq.~\eqref{eq:pade},
i.e. if $f_{m,n}=p_m/q_n$ for some polynomials
$p_m$ and $q_n$, then $f_{m,n}$ 
is characterized as the solution of the homogeneous linear
ordinary differential equation (ODE)
$$
f_{m,n}^{(k)}(t) + c_{k-1} f_{m,n}^{(k-1)}(t) + \ldots +
c_{1} f_{m,n}'(t)
+ c_0 = 0,$$
for some coefficients 
$c_0,\ldots,c_{k-1} \in \mathbb{R}$ 
with $c_0 \neq 0$ and
for some positive integer $k$.
Thus, the closeness of $g_{m,n}$,
obtained in Step 3,
to $f_r$, the true density, can
be re-expressed in terms of closeness of the 
solution of the above 
ODE to the true density.
This may give another way for analyzing the 
error of the PPF approximation.
\item 
If we consider the process without hard wall and without resetting ($Y$), then the FPT follows a simple distribution, which is the inverse Gaussian. In this case, the PPF becomes meaningless.
If we consider the process without 
either hard wall or resetting ($Y$), but not without the two together, then it remains interesting to compute the FPT distribution
and the PPF can be adapted accordingly.
This remarks extends to the Monte Carlo MRA of
Section \ref{sec:multires}.
\end{itemize}
\end{remarks}

\section{MRAs}
\label{sec:multires}

We will see that, although the PPF approximation is efficient and provides analytical formulas, it is only valid for specific combinations of the models parameters. 
Monte Carlo methods are in general a good alternative to compute the distribution of any FPT, however, they are computationally intensive.
In this section we propose a Monte Carlo method that we call the multiresolution algorithm (MRA), which leads to arbitrary accuracy and a combination of the MRA and the Euler-Maruyama algorithm, which reduces its computational time while keeping the accuracy.  

\subsection{MRA for the Wiener process}
\label{subsec:pure}

We describe a particular
strategy for generating trajectories of
Wiener processes, yielding the SMRA.
We provide an algorithm that allows the 
generation of a single sample path
of the Wiener process with arbitrary
resolution. The algorithm is
directly obtained from the following well-known
property of the Wiener process,
\begin{equation}\label{eq:conditioned_wiener}
    \mathrm{P} \Bigl[W_{t+h}\in (x,x+dx) \Big\vert W_t = a, W_{t+2h}=b \Bigr] = \frac{1}{\sqrt{\pi h}}\exp\left\{-\frac{\left(x-{\frac{a+b}{2}}\right)^2}{h}\right\} dx,
\; \forall t,h > 0, a,b \in \mathbb{R}.
\end{equation}

Therefore, given the knowledge of the state of the process at two times ($W_t=a$ and $W_{t+2h}=b$), Eq. \eqref{eq:conditioned_wiener} allows to sample the process at intermediate time ($ W_{t+h}$).  This property can be iterated to 
access arbitrary small temporal scales, as sketched in Figure  \ref{fig:multraj}, 
and we will use it to obtain estimations of FPTs at arbitrary accuracy.
Indeed, consider the stopping time equivalent to Eq. (\ref{eq:fpt_rv}) but for this simpler diffusion, $$T=\inf\{ t \le 0 \mid W_t \le 0\},$$
where we slightly alter the standard definition of the Wiener process to have some initial condition $X_0>0$ that makes $T$ non-trivial. Let us assume that
a discretization 
$\{W_{nh}\}_{n=1,2,\ldots}$ 
of $\{ W_t \}_{t \ge 0}$
is available, for some $h>0$ small. Then,
$$T_{h}=\inf\{ t \ge 0 \mid W_{nh} \le 0\}$$ is
an overestimation of $T$ (i.e. $T_h\ge T$) that converges to
$T$, as $h \to 0$. This tells us that
applying the MRA at increasing resolutions allows us to approximate the stopping time
at any desired precision. 

Given a certain time-interval, for example $t\in[0,1]$, we will note the $k$-th resolution level of the sample path in that interval 
as
$\mathbf{W}_k =  \{ W_{k,j} \}$, for $k=0,1,\ldots$, and  $j=0,1,\dots,2^{k}$. Both indices allows to evaluate the process at a particular time in $[0,1]$, $W_{k,j}=W_{t=2^{-k}j}$. While the first index informs about the level of refinement of the path, the secondary index, $j$, refers to the ordinal of each element given a certain $k$. Therefore, at initial level $k=0$, we have
\begin{equation}
\mathbf{W}_0 = \{W_{0,0} , W_{0,1}\}, \quad \text{where }  W_{0,0} = 0 \text{ and } W_{0,1} \sim \mathcal{N}(0,1).
\label{eq:w00}
\end{equation}
Here $\mathcal{N}(\mu,\sigma^2)$ stands for a Gaussian random variable with mean $\mu$ and variance $\sigma^2$.
Then, consecutive levels of refinement, $k\ge 1$, are obtained by
\begin{subequations}
\label{eq:wiener}
\begin{align}
        W_{k,2j} &= W_{k-1,j}, \quad \text{for } j = 0,1,\ldots, 2^{k-1} \label{subeq:evencase} \\
        W_{k,2j+1} &\sim \mathcal{N}\left(\dfrac{W_{k-1,j}+W_{k-1,j+1}}{2}, \dfrac{1}{2^{k}} \right),  \quad \text{for } j = 0,1, \ldots, 2^{k-1}-1 \label{subeq:oddcase}.
\end{align}
\end{subequations}
 For $k>0$, the even indices of $j$ are copied over as the resolution increases as in Eq.~\eqref{subeq:evencase}. The odd indices of $j$ are generated randomly using two consecutive variables from the previous resolution, as in Eq.~\eqref{subeq:oddcase}. Proceeding in this way, the overestimation of the FPT of $\mathbf{W}_k$ decreases as $k$ increases.

Further details about the MRA can be found at 
p. 277-279 of \cite{asmussen2007stochastic}.  In the Appendix~\ref{AP:multiresolution_algorithm} we show a pseudocode with the implementation of the MRA for the process $\{W_t\}_{t \ge 0}$.

\begin{figure}
    \centering
    \includegraphics[width=0.85\linewidth]{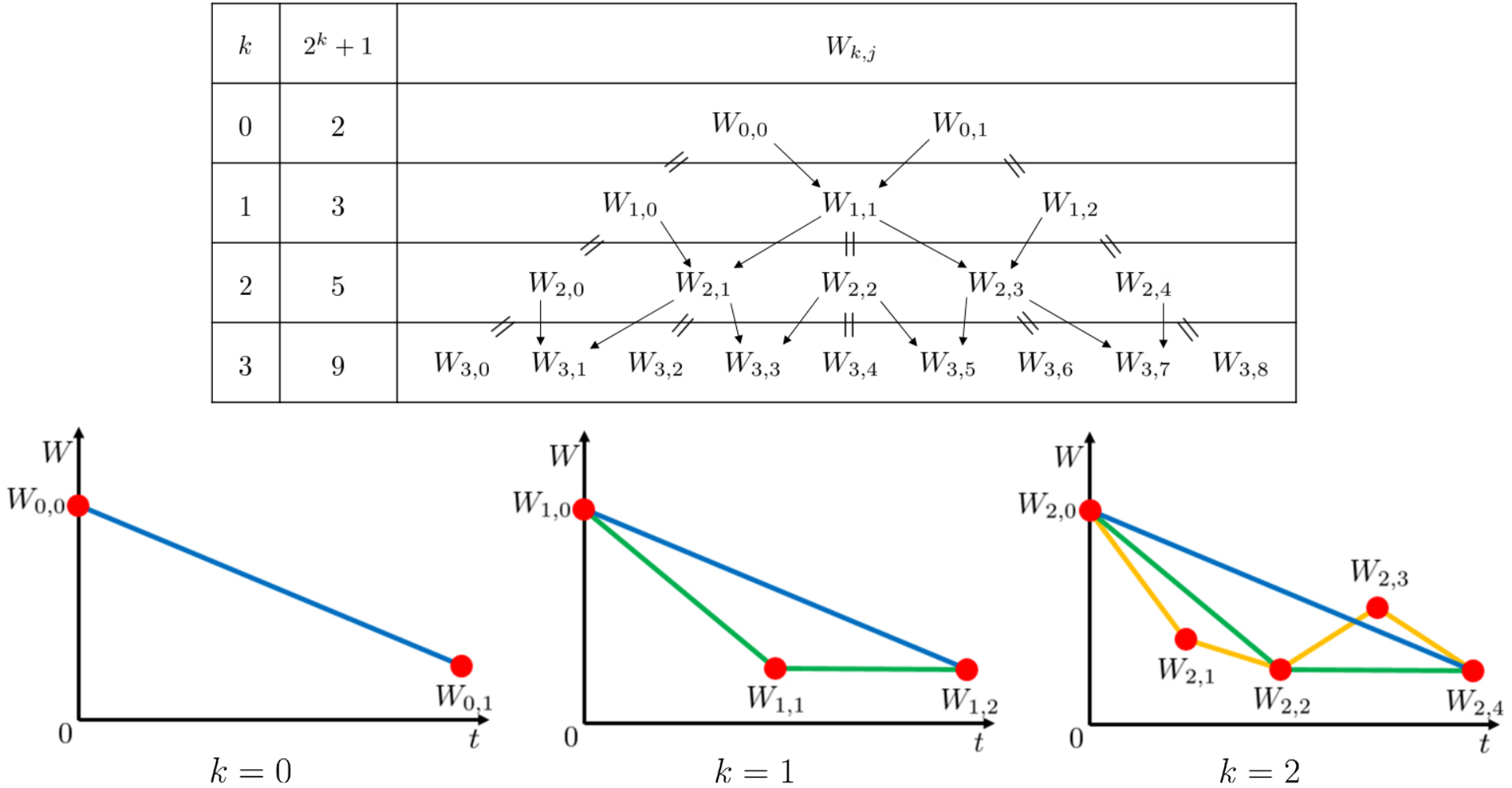}
    \caption{\it Upper table: Resolution level $k$ and corresponding number of generated variables $2^k+1$ (first two columns); generated variables $W_{k,j}$ (last column). Equal signs indicate copying a variable from a previous resolution; see Eq. \eqref{subeq:evencase}. Arrows indicate generating of a new variable from a previous resolution; see Eq. \eqref{subeq:oddcase}. Lower graphs: Illustrations of the trajectory generated subsequently by the MRA.}
    \label{fig:multraj}
\end{figure}

\subsection{MRA with resetting} \label{subsubsec:4-1-2}

This section describes the application of the MRA to a process with resetting, similar to the one of Section~\ref{subsec:2-1}, but without considering the reflecting boundary for now. Let us call $\{B_t\}_{t \ge 0}$ 
the drifted Brownian motion without hard wall and without resetting,
\begin{equation}\label{eq:unbounded_Br}
	dB_t= v\,dt+\sqrt{2D}\,dW_t, \; \forall t \ge 0,
\end{equation}
where $B_0 = x_0>0$. Let us call $\mathbf{t^\dag}$ the sequence of times at which the process experiences a reset

\begin{equation*}
    \mathbf{t^\dag} = \{t_0^\dag, t_1^\dag, t_2^\dag, \ldots\}, \quad \text{where } t_0^\dag=0, \, (t^\dag_{i+1}-t^\dag_i) \sim \text{Exponential}(1/r).
\end{equation*}
Let us define reset intervals from pairs of consecutive reset times: $\{(0, t^\dag_1), (t^\dag_1, t^\dag_2),$ 
$ \ldots \}$. The main idea to exploit is that  $B_{t}$ is a normally distributed random variable with mean $\mu = x_0+vt$ and variance $\sigma^2 = 2Dt$ within any reset interval. Therefore, we can build Brownian bridges connecting the extremes of reset intervals with arbitrary precision. We first consider the process in the first reset interval, $t\in[0,t^\dag_1]$, and, similar to what we did with the Wiener process in Section~\ref{subsec:pure}, we define the $k$-th resolution level of the sample path in that interval as $\mathbf{B}^{(1)}_k$. In the first resolution level, the process starts at the initial position, $B_{0,0} = x_0$, with initial time $0$, and ends at the final position, $B_{0,1}$, with final time $t^\dag_1$. Therefore, 
\begin{equation}
\mathbf{B}_0^{(1)} = \{B^{(1)}_{0,0} , B^{(1)}_{0,1}\}, \quad \text{where } B^{(1)}_{0,1} \sim \mathcal{N}(x_0+vt^\dag_1,
2Dt_1^\dag).
\label{eq:x00}
\end{equation}

Then, similar to Eq.~\eqref{eq:wiener}, the generation of the process at the next resolution $k$ is obtained by
\begin{subequations}
\label{eq:brown}
\begin{align}
        B^{(1)}_{k,2j} &= B^{(1)}_{k-1,j}, \quad \text{for } j = 0,1,\ldots, 2^{k-1},
        \quad \text{and}
        \label{subeq:bevencase} \\
        B^{(1)}_{k,2j+1} &\sim \mathcal{N}\left(\dfrac{B^{(1)}_{k-1,j}+B^{(1)}_{k-1,j+1}}{2}, \dfrac{2Dt^\dag_1}{2^{k}} \right),  \quad \text{for } j = 0,1, \ldots, 2^{k-1}-1. \label{subeq:boddcase}
\end{align}
\end{subequations}
The precise meaning of Eq. (\ref{subeq:boddcase})
is that the {\it conditional} distribution of $B^{(1)}_{k,2j+1}$
{\it given} $B^{(1)}_{k-1,j}$ and $B^{(1)}_{k-1,j+1}$
is Gaussian.
The same steps can be used to generate the Brownian trajectory in the second reset interval, $\mathbf{B}_k^{(2)}$, simply replacing $t_1^\dag$ by $t_2^\dag-t_1^\dag$ in Eqs.~\eqref{eq:x00} and~\eqref{subeq:boddcase}. In general, the Brownian trajectory in the $i^\text{th}$  reset interval, i.e. $(t^\dag_i,t^\dag_{i-1})$, and at refinement level $k$ is obtained through
\begin{equation}
\mathbf{B}_0^{(i)} = \{B^{(i)}_{0,0} , B^{(i)}_{0,1}\}, \quad \text{where } B^{(i)}_{0,1} \sim \mathcal{N}(x_0+v (t_i^\dag-t_{i-1}^\dag),2D(t_i^\dag-t_{i-1}^\dag)),
\label{eq:x00_V2}
\end{equation}
and
\begin{subequations}
\label{eq:brown_V2}
\begin{align}
        B^{(i)}_{k,2j} &= B^{(i)}_{k-1,j}, \quad \text{for } j = 0,1,\ldots, 2^{k-1},
        \quad \text{and}
        \\
        B^{(i)}_{k,2j+1} &\sim \mathcal{N}\left(\dfrac{B^{(i)}_{k-1,j}+B^{(i)}_{k-1,j+1}}{2}, \dfrac{2D(t_i^\dag-t_{i-1}^\dag)}{2^{k}} \right),  \quad \text{for } j = 0,1, \ldots, 2^{k-1}-1. \label{subeq:boddcase_V2}
\end{align}
\end{subequations}

Since $B^{(i-1)}_{k,2^{k}}\ne B^{(i)}_{k,0}=x_0$, the procedure described so far generates a multivalued process when consecutive reset intervals are concatenated. Therefore, once the multiresolution algorithm has been applied until the desired level of resolution, we obtain the proper discretization of the process setting  $B^{(i)}_{k,2^{k}}=x_0$, for $i=0,1,\ldots$ .

\subsection{Reflecting boundary}\label{subsubsec:4-1-1}

We now add the hard wall reflecting boundary~\cite{redner2001guide}
to the previous developments. The hard wall reflections that bound the process below level 1 annihilate the Gaussian nature of the process, which is essential to the multiresolution method. 
To solve this issue, we first generate an unbounded trajectory with the multiresolution method $\mathbf{B}_k^{(i)}$, at some resolution 
level $k \ge 1$. 
Then, by using the increments of $\mathbf{B}_k^{(i)}$, we define the reflected process $\mathbf{R}^{(i)}_{k}$ starting as follows:
\begin{gather}\label{eq:refcond}
\begin{split}
R^{(i)}_{k,j+1} & = \begin{cases}
R^{(i)}_{k,j}+\Delta B^{(i)}_{k,j},     & \text{ if } R^{(i)}_{k,j}+\Delta B^{(i)}_{k,j} \le 1 , \\
2 - (R^{(i)}_{k,j}+\Delta B^{(i)}_{k,j}), & \text{ if } R^{(i)}_{k,j}+\Delta B^{(i)}_{k,j} > 1 
\end{cases} \\
& = \mathrm{min}\left\{ R^{(i)}_{k,j} + \Delta B^{(i)}_{k,j}, 2- (R^{(i)}_{k,j} + \Delta B^{(i)}_{k,j}) \right\},
\end{split}
\end{gather}
where $\Delta B^{(i)}_{k,j} = B^{(i)}_{k,j+1}-B_{k,j}^{(i)}$,
for $j=0,\ldots,2^{k+1}$, and $R^{(i)}_{k,0}=x_0$.

\subsection{Stopping condition}

Every resolution $k$ will provide a value for the first time  at which the trajectory $\mathbf{R}_k$ is observed to go below the absorbing boundary, which is the null line,

\begin{equation} \label{eq:stopcond_V2}
    \tau_{r,k} = \inf \{t^{(i)}_{k,j}\in(t^\dag_{i-1},t^\dag_i) \, \vert \, R^{(i)}_{k,j} \leq 0 , \, i=1,2,\dots \} .
\end{equation}

This time provides an approximation to the target FPT at resolution $k$. Increasing the resolution yields finer sample paths and therefore finer estimations according to Eq.~\eqref{eq:stopcond_V2}. However, we note that estimations using Eq.~\eqref{eq:stopcond_V2} are always upper bounds to the real FPT. Furthermore, if $k>k'$, then 
\begin{equation*}
    \tau_{r}\le\tau_{r,k}\le\tau_{r,k'}.
\end{equation*}
This inequality has important consequences for our simulation schemes. Basically, every estimation induces an effective time horizon for our Monte Carlo method. Given the estimation at some resolution $k$, $\tau_{r,k}$, it makes no sense to simulate the process on times $t>\tau_{r,k}$ since $ \tau_{r}\le\tau_{r,k}$. 

An explicit condition at which we may stop the simulation can be defined by using an error threshold $\epsilon>0$. We define the maximum resolution level $k^{\dag}$ such that the time increment is below $\epsilon$,$$ k^{\dag} = - \left\lceil \frac{\log(\epsilon)}{\log(2)}\right\rceil, $$
where $\lceil x \rceil$ denotes the smallest integer
$\ge x$.

We finally defined all elements required to use the MRA to sample the FPT, we denote this version of the algorithm as the \textit{standard} MRA (SMRA). We start generating the unbounded motion from Eq.~\eqref{eq:unbounded_Br} in the first reset interval ($i=1$) for a target resolution $k^{\dag}$ using Eq.~\eqref{eq:brown_V2}.  Once the Brownian trajectory has been generated in the first reset interval, we compute the reflected process in this interval [through Eq.~\eqref{eq:refcond}]. If the stopping condition in the first interval is reached [Eq.~\eqref{eq:stopcond_V2}], then the simulation stops and we can sample an estimation for the FPT at resolution $k^{\dag}$. Otherwise, we will draw the next reset time, $t^\dag_2$, and proceed similarly in the second reset interval (i.e. generating the Brownian trajectory using the multiresolution scheme, then reflecting the trajectory and lastly checking if the reflecting trajectory reach the stopping criteria). This procedure is iterated until the stopping condition is reached.

Once the above process has been used to get an estimation of the FPT with resolution $k^{\dag}$, $\tau_{r,k^{\dag}}$, it can be further refined to reach a new resolution level $k > k^{\dag}$. In this second phase, there are no further sampling of reset times, and the multiresolution scheme is iterated on a fixed interval $[0,\tau_{r,k^{\dag}}]$. This is because $\tau_{r,k^{\dag}}$ is an upper bound for the true FPT.

\subsection{HMRA}
\label{subsec:hybrid}

Stochastic resetting can strongly increase the computational requirements of the SMRA. This is because it requires applying the multiresolution method in multiple reset intervals with a high resolution $k^\dag$, which is especially costly when $1/r \ll \tau_r$ because it results in a large number of resets.

\begin{figure}
    \centering
    \includegraphics[width=0.85\linewidth]{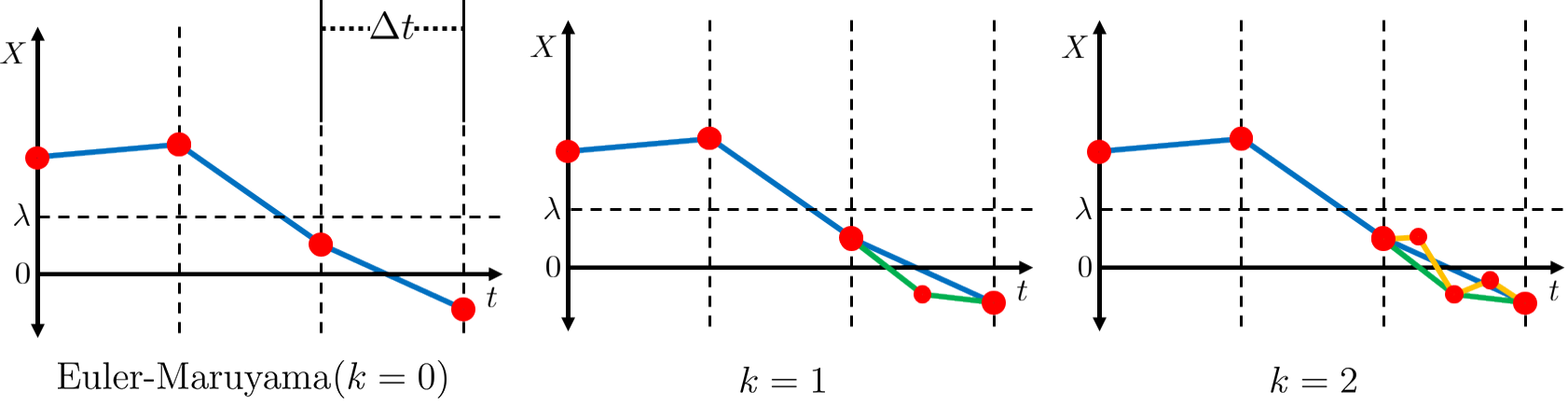}
    \caption{\it Illustrations of trajectories generated from HMRA. Time increments of the Euler-Maruyama algorithm ($\Delta t$) are sketched as the separation of vertical dashed lines.}
    \label{fig:muleul}
\end{figure}

The Euler-Maruyama algorithm partially overcomes these limitations, however it tends to overestimate the FPT and it does not have an arbitrary accuracy.
We propose the \textit{hybrid} MRA (HMRA) that refines the Euler-Maruyama trajectories with the MRA (Figure \ref{fig:muleul}). In doing so, first we produce a trajectory of the Euler-Maruyama (Algorithm \ref{alg:eulermaruyama}), with a time step $\Delta t$. Close to the absorbing boundary, i.e. $X<\lambda$, for some small $\lambda > 0$, we use the MRA to refine the approximation. The details are included in Appendix \ref{sec:appendix4}. Note finally that 
with our process
the higher order Milstein scheme
reduces to the Euler-Maruyama,
because the coefficient of $d W_t$ in Eq.
(\ref{eq:sde}), $\sqrt{2D}$, does not depend on $X_t$.

\section{Numerical results}                         \label{num} 

In order to illustrate the effectiveness of the PPF and the SMRA and HMRA, we analyse their performance in terms of accuracy, memory requirements, and speed. Source codes and Python packages of the PPF and MRA are available on the links provided in Appendices \ref{ap:proof} and \ref{ap:mra}.

\subsection{PPF}

The results of the approximation at order $m=2$ and $n=3$ are shown in Figures \hyperref[fig:distsim_compare]{3a} and \hyperref[fig:distsim_compare]{3b} which are compared with simulated results of the HMRA.
In Figure \ref{fig:distsim_compare} and Table \ref{table:distv0} we observe that the PPF method is a good approximation of the entire distribution obtained by Monte Carlo, for multiple values of $v$, except for the percentile $p=0.1$, for negative values of $v$.

\begin{figure}[!ht]
    \centering
    \includegraphics[width=0.9\linewidth]{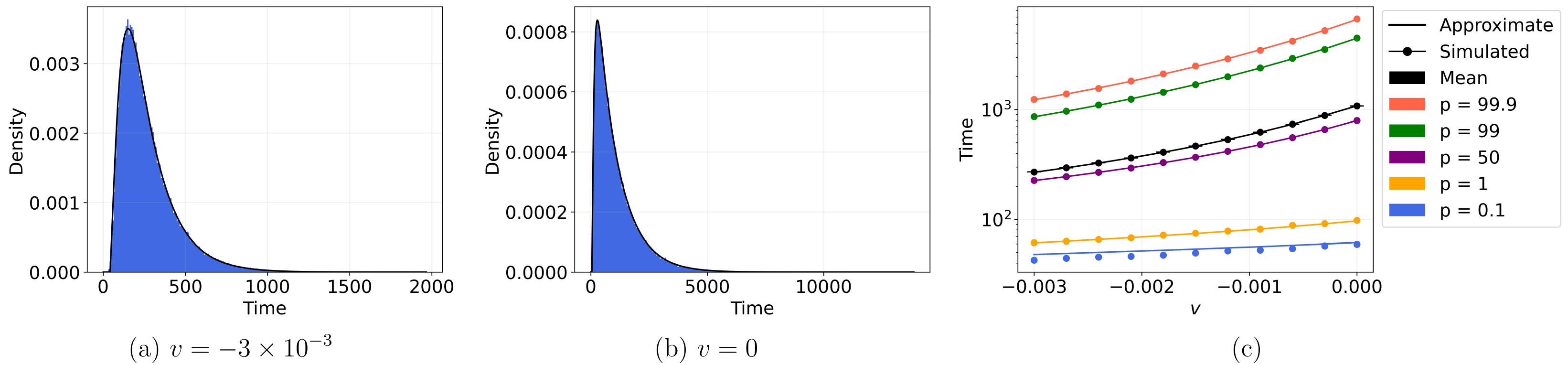}
    \caption{\it Results of PPF approximation, parameters are $D = 5 \times 10^{-4}$, $r = (3 \times 365)^{-1}$ and approximation orders $m = 2$, $n = 3$. (a)-(b) Histograms obtained from $10^6$ generated values and continuous curves obtained from PPF with varying drift $v = -3 \times 10^{-3}$ and $v=0$. (c) Comparing quantiles close to the tails, medians, and means generated from PPF and $10^6$ simulations for a varying drift.}
    \label{fig:distsim_compare}
\end{figure}

\begin{table}[!ht]
   \centering
    {\footnotesize
    \begin{tabular}{|l||l|l|}
    \hline
        $t$ & $\mathrm{P}_{HMRA}[\tau_r>t]$ & $\mathrm{P}_{PPF}[\tau_r>t]$ \\ \hline \hline
        342.140 & 0.25000 & 0.24918 \\ \hline
        379.322 & 0.19937 & 0.19800 \\ \hline
        416.503 & 0.15806 & 0.15718 \\ \hline
        453.685 & 0.12593 & 0.12470 \\ \hline
        490.866 & 0.10099 & 0.09891 \\ \hline
        528.047 & 0.07981 & 0.07843 \\ \hline
        565.229 & 0.06382 & 0.06219 \\ \hline
        602.410 & 0.05082 & 0.04931 \\ \hline
        639.592 & 0.04064 & 0.03910 \\ \hline
        676.773 & 0.03234 & 0.03100 \\ \hline
        713.955 & 0.02555 & 0.02458 \\ \hline
        751.136 & 0.02020 & 0.01949 \\ \hline
        788.317 & 0.01591 & 0.01545 \\ \hline
        825.499 & 0.01253 & 0.01225 \\ \hline
        862.680 & 0.01000 & 0.00971 \\ \hline
    \end{tabular}
    }
    {\footnotesize
    \begin{tabular}{|l||l|l|}
    \hline
        $t$ & $\mathrm{P}_{HMRA}[\tau_r>t]$ & $\mathrm{P}_{PPF}[\tau_r>t]$ \\ \hline \hline
        1445.966 & 0.25000 & 0.25089 \\ \hline
        1664.281 & 0.19842 & 0.19894 \\ \hline
        1882.596 & 0.15738 & 0.15775 \\ \hline
        2100.911 & 0.12428 & 0.12508 \\ \hline
        2319.225 & 0.09894 & 0.09919 \\ \hline
        2537.540 & 0.07823 & 0.07865 \\ \hline
        2755.855 & 0.06189 & 0.06236 \\ \hline
        2974.170 & 0.04967 & 0.04945 \\ \hline
        3192.485 & 0.03972 & 0.03921 \\ \hline
        3410.800 & 0.03117 & 0.03109 \\ \hline
        3629.115 & 0.02444 & 0.02465 \\ \hline
        3847.430 & 0.01964 & 0.01955 \\ \hline
        4065.745 & 0.01589 & 0.01550 \\ \hline
        4284.060 & 0.01255 & 0.01229 \\ \hline
        4502.375 & 0.01000 & 0.00975 \\ \hline
    \end{tabular}
    }
    \caption{\it Upper tail probabilities of HMRA and PPF in Figures \hyperref[fig:distsim_compare]{3a} (left table) and \hyperref[fig:distsim_compare]{3b} (right table).}
    \label{table:distv0}
\end{table}

 For a valid result from the PPF method, we need to identify all the roots of the denominator of the Pad{\'e} approximation, and ensure that they are real and negative, cf. Step 3 of Algorithm \ref{prop:Pade_approximant}. Having at least one positive root, implies that no Laplace inverse of $f_{m,n}$ exists. On the other hand, if there is at least one non-real root, the result yields a damped sinusoidal, i.e. a signed function. Varying the parameters for drift and diffusion reveals regions at which the PPF method will not work based on these criteria, as shown in Figure \ref{fig:rootvalid}.

\begin{figure}[!htb]
    \centering
    \includegraphics[width=0.85\linewidth]{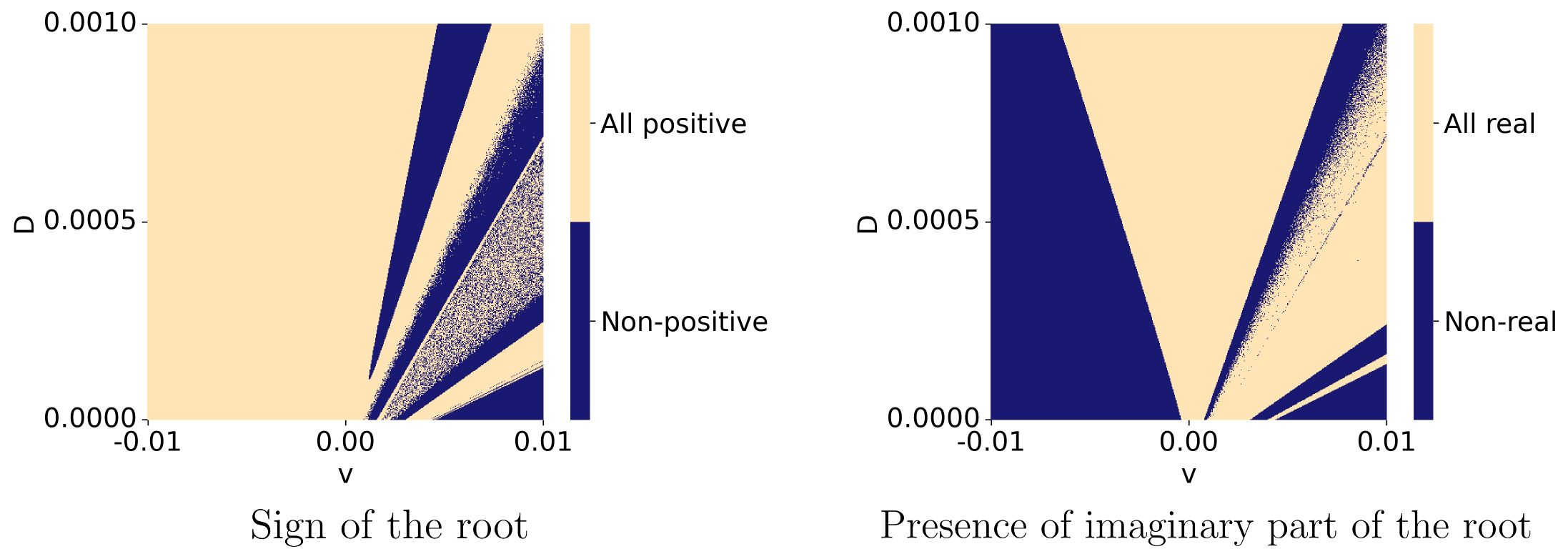}
    \caption{\it Regions of the parametric space $(v,D)$ of validity of the PPF with orders $m=2$, $n=3$, $r = (1/3) \times (1/365)$.}
    \label{fig:rootvalid}
\end{figure}

\subsection{SMRA and HMRA}

When comparing the accuracy of SMRA and HMRA by comparing their simulated mean FPT with the analytical mean FPT, we observe that as error threshold $\epsilon$ decreases, both algorithms converge to the analytical solution, whilst very small step sizes are necessary for Euler Maruyama to reach the same level of accurarcy (Figure \ref{fig:pvsh}).
In terms of computational requirements, when we decrease $\epsilon$, we observe a linear increase in $\langle k \rangle$, that would lead to an exponential increase in memory requirements as it is proportional to the number of points generated for each trajectory. Therefore, HMRA outperforms SMRA since it needs lesser points to reach the same accuracy (Figure \hyperref[fig:pareto]{6a}). Finally, in terms of speed, the HMRA outperforms the MRA by 2 to 4 orders of magnitude (Figure \hyperref[fig:pareto]{6b}).

\begin{figure}[!htb]
    \centering
    \includegraphics[scale = 0.4]{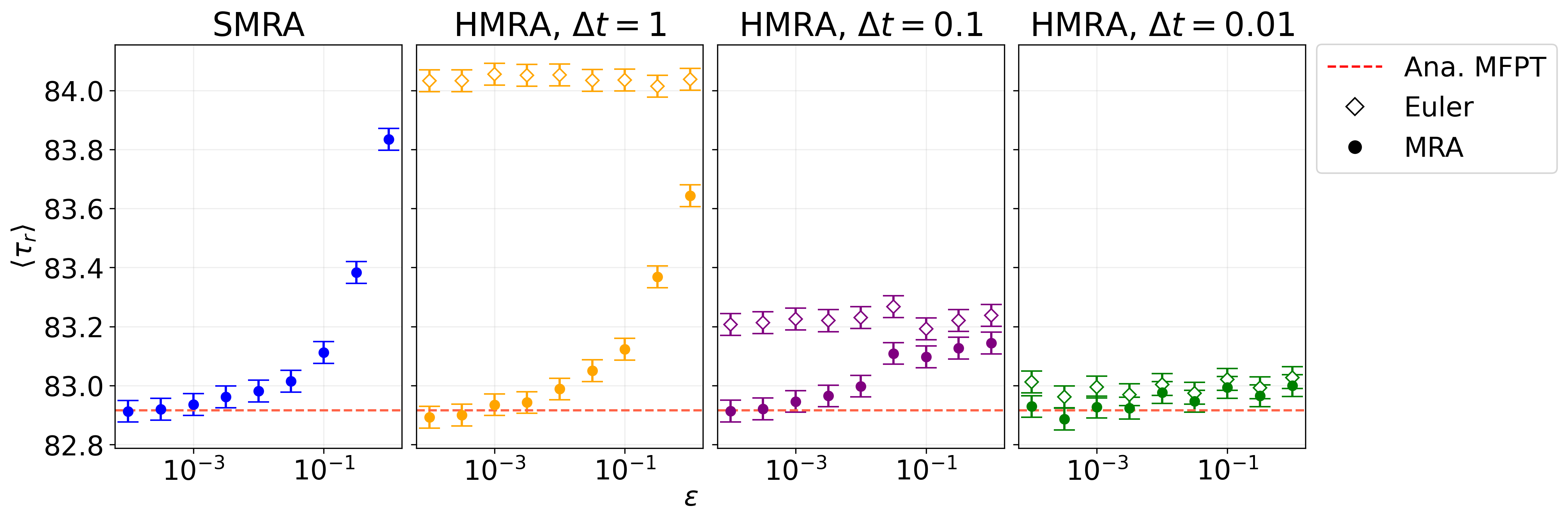}
    \caption{\it SMRA and HMRA results with different Euler-Maruyama time steps $\Delta t$, with changing threshold $\epsilon$. $v = -10^{-2}$, $D = 10^{-4}$, $r = (3\cdot 365)^{-1}$, $x_0 = 0.8$, $10^6$ simulations.}
    \label{fig:pvsh}
\end{figure}

\begin{figure}
    \centering
    \includegraphics[width=0.95\linewidth]{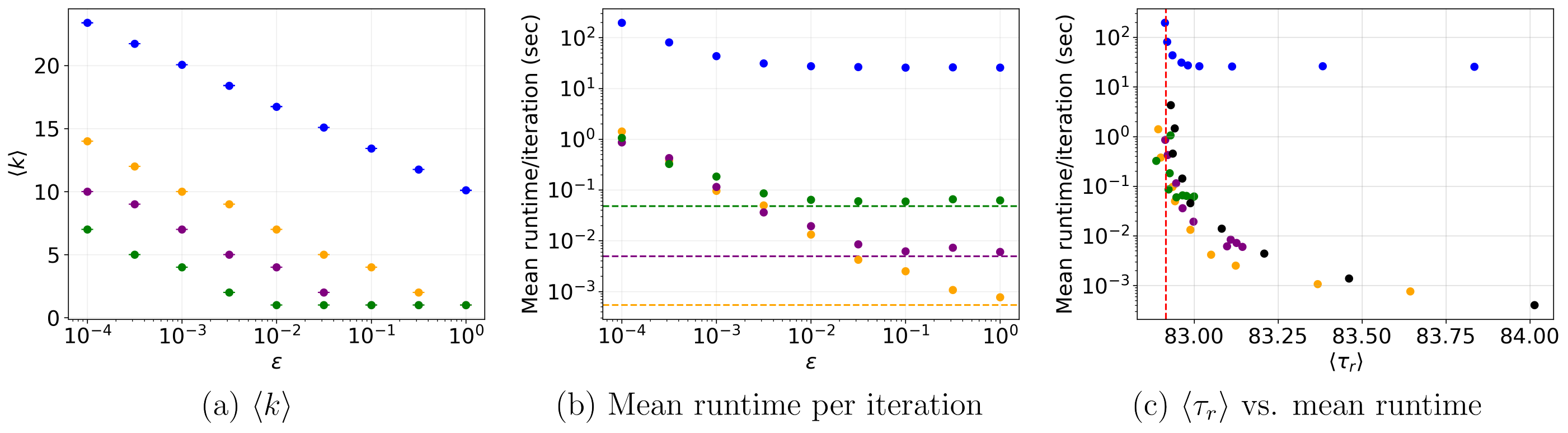}
    \caption{\it Speed and accuracy comparisons between the SMRA ({\color{blue} $\bullet$}), HMRA with $\Delta t = 1$ ({\color{orange} $\bullet$}), HMRA with $\Delta t = 0.1$ ({\color{Purple} $\bullet$}), and HMRA with $\Delta t = 0.01$ ({\color{Green} $\bullet$}). (a) Mean resolution per iteration. (b) Mean runtime per iteration. Dashed lines refer to the mean runtimes of the corresponding first Euler-Maruyama estimates for HMRA. (c) Comparisons between the mean FPT and the mean runtime per iteration of different simulation schemes. SMRA and HMRA simulations taken from Figure \ref{fig:pvsh}, Euler-Maruyama ($\bullet$) plotted simulations with varying $\Delta t$, vertical red dashed line is the analytical mean FPT.}
    \label{fig:pareto}
\end{figure}

Although HMRA outperforms the SMRA, HMRA requires two parameters to be set, $\epsilon$ and $\Delta t$. For this reason, we study the impact of either parameters in terms of computational time. For all the values of $\Delta t$, we identify where reducing $\epsilon$ does not correspond to a significant increase in computational time. However, as $\epsilon$ is further decreased, the computational time rapidly increases by orders of magnitude. This behavior describes a Pareto front, suggesting the existence of a critical value of $\epsilon$ which yields the best trade-off between accuracy and speed (Figure \hyperref[fig:pareto]{6c}).

\section{Discussion} \label{sec:discussion}

We have addressed the problem of computing the FPT distribution to the null level of the drifted Brownian motion with upper hard wall barrier and Poisson resetting times. In doing so, we have first introduced the PPF approximation, which is an analytical formula that can be immediately evaluated. We have then introduced the SMRA and HMRA,
that are purely numerical but give arbitrary accuracy, and we have shown how they overcome the limitations of some other available methods in terms of accuracy and speed. 
Moreover,
by using the survival function of the FPT, we have found a more compact expression of the mean FPT with resetting, which only uses the Laplace transform of the FPT without resetting \cite{evans_stochastic_2020, pal_first_2019}. We have provided an easier derivation than one in \cite{ramoso_stochastic_2020}.

Precisely, with the PPF we have proposed an approximation of the Laplace transform by taking the Pad{\'e} approximation and its partial fraction decomposition \cite{Davies1979}. These steps followed by exact inversion yield our PPF approximation.
It is accurate and, given that it does not rely on numerical integration, as do most methods, like the Talbot approximation \cite{talbot1979accurate, abateUnifiedFrameworkNumerically2006}, it yields an instantaneous numerical evaluation. 
The disadvantage of the PPF is that the region of the parametric space where the approximation is limited. But we have identified the region where it is valid.

To overcome this limitation, we have presented the MRA. Given two consecutive points on a trajectory, a property of Wiener processes is that the intermediate of the two points is Gaussian with mean given by the average of the first and last points \cite{Lawler_Gregory_F_1995-07-01}. The MRA exploits this bridge property by generating intermediate points between intervals of the trajectory. Hence, the MRA can generate finer values up to any threshold of error, compared to other methods for which the level of resolution is fixed in advance \cite{asmussen2007stochastic}, making the control of the error difficult.

We have introduced two versions of the MRA, namely the SMRA and HMRA. Either simulation algorithm has shown high accuracy and convergence,
this with respect to the exact analytical mean FPT and 
with respect 
to the simulation results of the standard Euler-Maruyama algorithm,
which is based on a single and very small time increment. 
It is known that compute the FPT
by Euler-Maruyama gives overestimation proportional to the time increment \cite{asmussen2007stochastic}. Thus, both the SMRA and HMRA correct this overestimation through the construction of Brownian bridges. 
Because
the SMRA computes the entire sample path over a large time horizon, 
(in order to obtain the FPT to an arbitrary accuracy), it requires substantial computational resources. To increase speed and memory efficiency, the HMRA proceeeds as follows: it first approximates the FPT with Euler-Maruyama and then it refines locally the resolution through MRA.
This yields a more efficient Monte Carlo technique than either the Euler-Maruyama or the SMRA individually.

Our Monte Carlo methods can readily be used with other Gaussian processes, such as the Ornstein-Uhlenbeck process
(see e.g. p. 229-230 of \cite{gatto2022stationary,dubey2023first}).
Beyond the FPT, our method could be used to estimate 
the first passage area, namely
the area enclosed between the null line and the path of the process up to the FPT, that has recently received substantial
attention, cf. e.g. \cite{abundo2023first}. 
We envisage that it would be possible to extend our results to non-Gaussian $\alpha$-stable L\'evy processes using generalizations of the Brownian bridge~\cite{Majumdar_2015_effective,Aguilar_PRE,Aguilar_Gatto_PRE}. Lastly, future work could focus on characterizing the optimal parameter choices that balance CPU time consumption and precision, as has been done for jumping processes~\cite{aguilar2024biased}.

\section{Code availability statement}

The source code (in Python) for the PPF, SMRA and HMRA is available on Github through the following links:
\begin{table}[!htb]
\centering
\begin{tabular}{rl}
PPF:           & \href{https://github.com/jarmsmagalang/ppf_approx}{\tt https://github.com/jarmsmagalang/ppf\textunderscore approx} \\
SMRA and HMRA: & \href{https://github.com/jarmsmagalang/multires}{\tt https://github.com/jarmsmagalang/multires} 
\end{tabular}
\end{table}

\noindent Furthermore, corresponding Python packages can be installed from PyPI at:
\begin{table}[!htb]
\centering
\begin{tabular}{rl}
PPF:           & \href{https://pypi.org/project/ppf-approx}{\tt https://pypi.org/project/ppf-approx} \\
SMRA and HMRA: & \href{https://pypi.org/project/multires}{\tt https://pypi.org/project/multires} 
\end{tabular}
\end{table}

\newpage

\centerline{\small \bf Acknowledgements}
\noindent 
{\small The authors are thankful to David Ginsbourger 
for his dedicated support and
insightful scientific discussions.

This work had been supported by the 
UniBE ID Grant 2021,
the Ruth \& Arthur Scherbarth Foundation,  
the Agencia Estatal de
Investigación (AEI, MCI, Spain) MCIN/AEI/10.13039/ 501100011033 and Fondo
Europeo de Desarrollo Regional (FEDER, UE) under Project APASOS
(PID2021$\textbf{-}$ 122256NB$\textbf{-}$C21$/$C22) and the María de Maeztu Program for units of
Excellence in $\text{R\&D}$, grant CEX2021$\text{-}$ 001164$\text{-}$M.}

\bibliographystyle{elsarticle-num} 
\bibliography{main}

\newpage

\begin{appendices}

\section{Proofs}\label{ap:proof}

\begin{proof}[Proof of Lemma~\ref{lemma:propagator}]
We begin by taking the Laplace transform of Eq. \eqref{eq:fke} with respect to $t$ and proceed to solve for the propagator using a procedure similar to the method of undetermined coefficients. 
Denote by $\mathcal{L}$ the operator of Laplace transform.
First, note that swapping integral and derivative operation over different variables yields
\begin{equation*}
	\mathcal{L}\left(vp'(x,\cdot)\right)(s)
	=v\tilde{p}'(x,s)\,
	\quad\text{and}\quad
	\mathcal{L}\left(Dp''(x,\cdot)\right)(s)
	=D\tilde{p}''(x,s)\,.
\end{equation*}

The first term is the only term in the equation that has a derivative dependent on $t$, which yields
\begin{equation*}
\partial_t p(x,t) = s\tilde{p}(x,s) - p(x,0) = s \tilde{p}(x,s) - \delta(x-x_0)
\end{equation*}
where $p(x,0) = \delta(x-x_0)$, taken from the initial condition in Eqs. \eqref{eq:system_of_p}. Therefore, the transform of Eq. \eqref{eq:fke}
for  $\tilde{p}(x,s)$ is given by a nonhomogeneous differential equation
\begin{equation}
\label{eq:fke_lt}
	s\tilde{p}(x,s) - \delta(x-x_0)
	+v\tilde{p}'(x,s)
	-D\tilde{p}''(x,s)
	=0\,,\quad s\geq 0\,,
\end{equation}
for $x\in[0,x_0)\cup(x_0,1]$. Rearranging the equation to have the nonhomogeneous term on the right-hand side, we obtain
\begin{equation}\label{eq:nonhom}
	D\tilde{p}''(x)-v\tilde{p}'(x)-s\tilde{p}(x)=-\delta(x-x_0).
\end{equation}
In order to find a solution to Eq.~\eqref{eq:fke_lt},
let us fix $s$, denote $\tilde{p}_<(x)=\tilde{p}(x,s)$ for $x \in [0,x_0)$,
$\tilde{p}_>(x)=\tilde{p}(x,s)$ for $x \in(x_0,1]$ and search an analytic solution to the associated homogeneous differential equation to Eq. \eqref{eq:nonhom}
\begin{equation*}
	D\tilde{p}''(x)-v\tilde{p}'(x)-s\tilde{p}(x)=0,
\end{equation*}
over these two intervals.
The characteristic equation is
$
	D\alpha^2-v\alpha-s=0$,
with roots, defined in Eq.~\eqref{eq:roots}, 
\begin{equation*}
	\alpha_\pm
	=\frac{v\pm\sqrt{v^2+4Ds}}{2D}
	=\frac{v\pm\omega(s)}{2D}
    =\rho\pm\theta(s)\,,
\end{equation*}
where the constants are defined by Eq.~\eqref{eq:constants}. Therefore,
\begin{equation*}
\tilde{p}_>(x)=Ae^{\alpha_+ x} + Be^{\alpha_- x}
\quad\text{and}\quad
\tilde{p}_<(x)=ae^{\alpha_+ x}+be^{\alpha_- x}\,,
\end{equation*}
for some constants $a$, $b$, $A$ and $B$ to be determined.
From the boundary conditions Eq. \eqref{eq:bc} we obtain
\begin{equation*}
	\tilde{p}_<(x)=a(e^{\alpha_+ x}-e^{\alpha_- x})
\end{equation*}
and
\begin{equation*}
	v(Ae^{\alpha_+}+Be^{\alpha_- }) = D(\alpha_+ Ae^{\alpha_+} +\alpha_- Be^{\alpha_-})\,.
\end{equation*}
The latter equation may be rewritten as
\begin{equation*}
	(D\alpha_+ - v)Ae^{\alpha_+} = (v - D\alpha_-)Be^{\alpha_-} = C(s)\,,
\end{equation*}
implying
\begin{equation*}
	\tilde{p}_>(x)=C\left(
	\frac{e^{\alpha_+ (x-1)}}{D\alpha_+ - v}+
	\frac{e^{\alpha_- (x-1)}}{v - D\alpha_-}
	\right)\,.
\end{equation*}
The continuity of $p(\cdot,t)$ (see \cite{karlin_taylor_1981}, Chapter 15, Section 5)
yields $\tilde{p}_<(x_0)=\tilde{p}_>(x_0)$, hence one can write
\begin{equation*}
	C\left(
	\frac{e^{\alpha_+ (x_0-1)}}{D\alpha_+ - v}+
	\frac{e^{\alpha_- (x_0-1)}}{v - D\alpha_-}
	\right)
	=
	a\left(
	e^{\alpha_+ x_0}-e^{\alpha_- x_0}
	\right)
	=c(s).
\end{equation*}
Consequently, we obtain
\begin{equation*}
	\tilde{p}_<(x)=c(s)
	\frac{e^{\alpha_+ x}-e^{\alpha_- x}}
	{e^{\alpha_+ x_0}-e^{\alpha_- x_0}}
	\; \text{ and } \;
	\tilde{p}_>(x)=c(s)
	\frac{
	\frac{e^{\alpha_+ (x-1)}}{D\alpha_+ - v}+
	\frac{e^{\alpha_- (x-1)}}{v - D\alpha_-}
	}
	{
	\frac{e^{\alpha_+ (x_0-1)}}{D\alpha_+ - v}+
	\frac{e^{\alpha_- (x_0-1)}}{v - D\alpha_-}
	}\,.
\end{equation*}
Noticing
\begin{equation*}
	v-D\alpha_-=D\alpha_+
	\quad\text{and}\quad
	v-D\alpha_+=D\alpha_-,
\end{equation*}
we have
\begin{equation*}
	\tilde{p}_>(x)=c(s)
	\frac{\alpha_+ e^{\alpha_+ (x-1)} - \alpha_- e^{\alpha_- (x-1)}}
	{\alpha_+ e^{\alpha_+ (x_0-1)} - \alpha_- e^{\alpha_- (x_0-1)}}\,.
\end{equation*}
Our goal is now to determine the function $c(s)=\tilde{p}(x_0,s)$.

We proceed to examine conditions involving the derivative of $p$ following standard procedures (see e.g. p. 16
of \cite{redner2001guide} or p. 449 of \cite{Arfken7th}). We now return to the nonhomogeneous differential equation in Eq. \eqref{eq:nonhom}. Integrating the above equation on a segment around $x_0$,
\begin{align*}
	\int_{x_0-\epsilon}^{x_0+\epsilon}\left\{ D\tilde{p}''(x,s) - v\tilde{p}'(x,s)
	-  s\tilde{p}(x,s)\right\} \, dx
	&=-\int_{x_0-\epsilon}^{x_0+\epsilon}\delta(x-x_0) \, dx \,,\quad s\geq 0\,, \epsilon>0 
 \end{align*}
 then,
 \begin{align*}
 D\tilde{p}'(x,s) \Big|_{x_0-\epsilon}^{x_0+\epsilon} - v\tilde{p}(x,s) \Big|_{x_0-\epsilon}^{x_0+\epsilon} - \int_{x_0-\epsilon}^{x_0+\epsilon}\left(s\tilde{p}(x,s)\right) \, dx \
	&=-1.
\end{align*}
In the limit $\epsilon \rightarrow 0$, the last two terms at the left-hand side tend to zero due to continuity of $\tilde{p}(x,s)$ at $x = x_0$. While the first summand can be rearranged and evaluated in terms of $\tilde{p}'_<$ and $\tilde{p}'_>$,

\begin{equation}
\label{eq:mistery}
	D\left( \tilde{p}'_<(x_0)-\tilde{p}'_>(x_0)	\right)=1\,.
\end{equation}

Let us compute the first derivative
\begin{equation*}
	\tilde{p}'_<(x)=c \frac{\alpha_+ e^{\alpha_+ x} -\alpha_- e^{\alpha_- x}}{e^{\alpha_+ x_0}-e^{\alpha_- x_0}}\,,
\end{equation*}
and evaluate it at $x_0$
\begin{align*}
	\tilde{p}'_ < (x_0)&=c\frac{\alpha_+ e^{\alpha_+ x_0}	-\alpha_- e^{\alpha_- x_0}}{e^{\alpha_+ x_0}-e^{\alpha_- x_0}}
	\\
	&=c \frac{\alpha_+ e^{\theta x_0} -\alpha_- e^{-\theta x_0}}{e^{\theta x_0}-e^{-\theta x_0}}	\,.
\end{align*}
Analogously,
\begin{equation*}
\tilde{p}'_>(x)=c\frac{\alpha_+^2 e^{\alpha_+ (x-1)} - \alpha_-^2 e^{\alpha_- (x-1)}}{\alpha_+ e^{\alpha_+ (x_0-1)} - \alpha_- e^{\alpha_- (x_0-1)}}\,,
\end{equation*}
leads to
\begin{align*}
	\tilde{p}'_ >(x_0)&=c\frac{\alpha_+^2 e^{\alpha_+ (x_0-1)} - \alpha_-^2 e^{\alpha_- (x_0-1)}}{\alpha_+ e^{\alpha_+ (x_0-1)} - \alpha_- e^{\alpha_- (x_0-1)}}
	\\
	&=c\frac{\alpha_+^2e^{\theta (x_0-1)}-\alpha_-^2e^{-\theta (x_0-1)}}{\alpha_+ e^{\theta (x_0-1)}-\alpha_- e^{-\theta (x_0-1)}}\,.
\end{align*}
Therefore, by inserting Eq.~\eqref{eq:mistery}
\begin{align*}
	1&=Dc\left(
    \frac{\alpha_+ e^{\theta x_0}-\alpha_- e^{-\theta x_0}}{e^{\theta x_0}-e^{-\theta x_0}}
	-
	\frac{\alpha_+^2e^{\theta (x_0-1)}-\alpha_-^2e^{-\theta (x_0-1)}}{\alpha_+ e^{\theta (x_0-1)}-\alpha_- e^{-\theta (x_0-1)}}
    \right)
	\\
	&=Dc \, \frac{e^{-\theta}(\alpha_+^2-\alpha_+\alpha_-)+e^\theta(\alpha_-^2-\alpha_+\alpha_-)}{\alpha_+ e^{\theta(2x_0-1)}+ \alpha_- e^{-\theta(2x_0-1)}-\alpha_+ e^{-\theta} -\alpha_- e^\theta}
	\\
	&=2Dc \, \frac{e^{-\theta}\theta(\theta+\rho)+e^\theta\theta(\theta-\rho)}{\alpha_+ e^{\theta(2x_0-1)}+ \alpha_- e^{-\theta(2x_0-1)}-\alpha_+ e^{-\theta} -\alpha_- e^\theta}
	\\
	&=2Dc \, \frac{\theta (\theta \, \mathrm{cosh}(\theta)-\rho \, \mathrm{sinh}(\theta))}{\rho \, \mathrm{cosh}[\theta (2x_0-1)]+\theta \, \mathrm{sinh}[\theta (2x_0-1)] -\rho \, \mathrm{cosh}(\theta)+\theta \, \mathrm{sinh}(\theta)}
	\\
	&=c\, \frac{\omega (\omega \, \mathrm{cosh}(\theta)-v \, \mathrm{sinh}(\theta))}{v \, \mathrm{cosh}[\theta (2x_0-1)]+\omega \, \mathrm{sinh}[\theta (2x_0-1)] -v\, \mathrm{cosh}(\theta)+\omega \, \mathrm{sinh}(\theta)}\,,
\end{align*}
which gives
\begin{align*}
	c(s)=\tilde{p}(x_0, s)
	&=\frac{v \, \mathrm{cosh}[\theta(2x_0-1)]+\omega \, \mathrm{sinh}[\theta(2x_0-1)]-v \, \mathrm{cosh}(\theta)+\omega \, \mathrm{sinh}(\theta)}{\omega^2 \, \mathrm{cosh}(\theta)-v\omega \, \mathrm{sinh}(\theta)}
	\\
	&=\frac{2 \, \mathrm{sinh}(\theta x_0)\left\{\omega \, \mathrm{cosh}[\theta(x_0-1)]+v \, \mathrm{sinh}[\theta(x_0-1)]\right\}}{\omega^2 \, \mathrm{cosh}(\theta)-v \, \omega\mathrm{sinh}(\theta)}\,,
\end{align*}
where $\omega$ and $\theta$ are defined in Eq.~\eqref{eq:constants}.
All in all, we obtain the expression Eqs. \eqref{eq:p_below}
and \eqref{eq:px_0}
for the Laplace transform of the propagator. 
\end{proof}

\begin{proof}[Proof of Proposition~\ref{p22}]
\label{AP:proof_22}
Let us define the distribution function of $\tau_r$ by
$F_r(t)=\int_0^t f_r(s)ds$ and its
\emph{survival function} by 
$S_r(t)=1-F_r(t)=\PP(\tau_r>t)$, $t\geq 0$.
In view of the known relation
\begin{equation}
\label{eq:kr}
	f_r(t)=-\partial_t{S}_r(t)\,,
\end{equation} 
we may re-express the expectation of interest in the following way:
\begin{equation}
\label{eq:mat0}
	 \E [\tau_r]
    =\int_0^\infty t f_r (t)\,dt
	=-\int_0^\infty t \partial_t{S}_r (t)\,dt
	=\int_0^\infty S_r(t)\,dt
	= \tilde{S}_r(0)\,,
\end{equation}
where the tilde always denotes the Laplace transform of a function.
In order to find an expression for the latter term, we use a renewal equation that connects the survival function in the resetting case to
the survival function for the evolution without resetting.
The following equation is well-known in the literature relative to
stochastic resetting and can be found e.g. in \cite{evans_stochastic_2020,pal_first_2019, roldan_path-integral_2017}
\begin{equation*}
	S_r(t)=e^{-rt}S_0(t) + r\int_0^t e^{-ru}S_0(u)S_r(t-u)\,du\,.
\end{equation*}
By taking Laplace transform on both sides we obtain
\begin{equation*}
	\tilde{S}_r(s)=\tilde{S}_0(s+r)+r \tilde{S}_0(s+r) \tilde{S}_r(s)\,,
\end{equation*}
hence, for all $s$ such that $\tilde{S}_0(s+r) \neq r^{-1}$, 
\begin{equation}
\label{eq:re_lt}
	\tilde{S}_r(s)=\frac{\tilde{S}_0(s+r)}{1-r\tilde{S}_0(s+r)}\,.
\end{equation}
The link between $\tilde{S}_0(s)$ and $\tilde{f}_0(s)$ is provided by
the Laplace transform of Eq.~\eqref{eq:kr}, which for $r=0$ yields, for all $s \neq 0$,
\begin{equation*}
	\tilde{S}_0(s)=\frac{1-\tilde{f}_0(s)}{s}\,.
\end{equation*}
Plugging the latter into Eq.~\eqref{eq:re_lt} gives for all $s \neq - r$ such that $\tilde{f}_0(s+r) \neq -s/r$,
\begin{equation} \label{eq:tildeSr}
	\tilde{S}_r(s)
	=\frac{1-\tilde{f}_0(s+r)}{s+r\tilde{f}_0(s+r)}\,,
\end{equation}
and by Eq. \eqref{eq:mat0} we finally obtain
\begin{equation}
\label{eq:mat1}
	\E [\tau_r] =\frac{1}{r}\left\{\frac{1}{\tilde{f}_0(r)}-1\right\}\,.
\end{equation}
All that remains to do is expressing $\tilde{f}$ in terms of
what we explicitly know, that is $\tilde{p}$. This is done by evaluating the Laplace transform of the probability current (Eq. \eqref{eq:probcurrent}) at the absorbing boundary $x=0$
\begin{equation}
	\tilde{f}_0(s) = \tilde{J}(0,s) = 
     D \tilde{p}'(0,s) - v \tilde{p}(0,s)
    = D \tilde{p}'(0,s) \, .
    \label{eq:249}
\end{equation}
The probability current $J(x,t)$ is the rate of influx of Brownian particles at the position $x$, and since $x=0$ is an absorbing boundary, $J(0,t)$ will give the FPT distribution \cite{redner2001guide}. This is related to the derivation of the inverse Gaussian distribution using the method of images \cite{Whitmore1987, Chhikara1988-nr}.
Therefore, from Eq. \eqref{eq:249} we obtain Eq. \eqref{eq:tildef0} for \\ $s>- v^2/(4D)$
\begin{equation}
    \tilde{f}_0(s)=
	e^{-\rho x_0}\frac{\omega(s) \, \mathrm{cosh}\{\theta(s)(x_0-1)\}+v \, \mathrm{sinh}\{\theta(s)(x_0-1)\}}
	{\omega(s) \, \mathrm{cosh}\{\theta(s)\}-v \, \mathrm{sinh}\{\theta(s)\}}\,,
\end{equation}
as it has been previously shown in \cite{Godec2016} and plugging the latter into Eq.~\eqref{eq:mat1} yields an explicit analytic expression for the mean absorption time
\begin{equation*}
	\mathbb{E}[\tau_r]
	=\frac{1}{r}\left\{\frac{e^{\rho x_0} \, (\omega(r) \, \mathrm{cosh}\{\theta(r)\}-v \,\mathrm{sinh}\{\theta(r)\})}{\omega(r) \, \mathrm{cosh}\{\theta(r)(x_0-1)\}+v \, \mathrm{sinh}\{\theta(r)(x_0-1)\}}-1\right\}\,,
\end{equation*} 
where $\rho$, $\omega$ and $\theta$ have been defined in Eq.~\eqref{eq:constants}.
\end{proof}

\newpage

\section{Implementation of SMRA and HMRA}\label{ap:mra}

\subsection{MRA}\label{AP:multiresolution_algorithm}
We first consider a simple case of the MRA in the time interval $[0,1]$. In the algorithm, the discretized Wiener process, $\{ W_{k,j} \}$, with $k,j$ nonnegative integers,
is obtained through Eqs. (\ref{eq:w00}) and
(\ref{eq:wiener}). 

\begin{algorithm}
\caption{Basic MRA}
\label{alg:basicmultires}
\KwIn{initial position $W_{0}$}
\KwOut{$\mathbf{W} = \{W_{0}, \ldots, W_{2^{k-1}}  \}, \mathbf{t} = \{0, \ldots, t^\dag  \}$}
$k \coloneqq 0$ \;
$W_1 \coloneqq W \sim \mathcal{N}(0,1)$ \;
$\mathbf{W} \coloneqq \{W_{0}, W_{1}\}$ \;
$\mathbf{t} \coloneqq \{0,1 \}$ \;
\While{$\mathbf{W}$ has not reached a stopping condition}{
$k \coloneqq k+1$ \;
$\mathbf{W}^* \coloneqq \emptyset; \quad \mathbf{t}^* \coloneqq \emptyset$ \;
\For{$j \in [0,2^{k}]$}
{
$j' \coloneqq \left \lfloor \dfrac{j}{2} \right \rfloor$ \;
\uIf{$j$ is even}{
$W_j^* \coloneqq W_{j'}$ \Comment*[r]{note: $W_{j'}$ is $j'$-th element of $\mathbf{W}$}
$t_j^* \coloneqq t_{j'}$ \Comment*[r]{note: $t_{j'}$ is $j'$-th element of $\mathbf{t}$}
}
\uElse{
$\mu_{int} \coloneqq \dfrac{W_{j'}+W_{j'+1}}{2}$ \;
$\sigma^2_{int} \coloneqq \dfrac{1}{2^{k}}$ \;
$W_{j}^* \coloneqq W \sim \mathcal{N}(\mu_{int}, \sigma^2_{int})$\;
$t_j^* \coloneqq \dfrac{t_{j'}+t_{j'+1}}{2}$ \;
}
}
$\mathbf{W} \coloneqq \mathbf{W}^*$\;
$\mathbf{t} \coloneqq \mathbf{t}^*$ \;
}
\end{algorithm}

This pseudocode will generate the two arrays $\mathbf{W}$ and $\mathbf{t}$, 
the positions and times respectively of the trajectories. 
The final resolution level $k$ can be arbitrarily obtained 
with a given stopping condition.

\newpage
\subsection{Euler-Maruyama trajectory for HMRA}
\label{sec:appendix2}

This method is similar to the Euler-Maruyama algorithm described in Algorithm \ref{alg:eulermaruyama}, but modified for use in the hybrid algorithm. This code will output two arrays $\mathbf{X}^E$ and $\mathbf{t}^E$ which are Euler-Maruyama trajectories but below a position threshold $\lambda>0$. This is because above this certain threshold in the position, an absorption is unlikely to happen. Note that reflection and resetting occur in this algorithm.

\vspace{5mm}

\begin{algorithm}[H]
\caption{\texttt{euler}: Generate Euler trajectory below threshold $\lambda$}
\LinesNumbered
\KwIn{$X_0,t_0,v,D,r,\Delta t, \lambda$}
\KwOut{$\mathbf{X}^E,\mathbf{t}^E$}
$X \coloneqq X_0$ \;
$t \coloneqq t_0$ \;
$\mathbf{X}^E \coloneqq \emptyset$ \;
$\mathbf{t}^E \coloneqq \emptyset$ \;
$t^\dag \coloneqq t_r \sim \mathrm{Exp}(1/r)$ \;
\While{$X>0$}{
\uIf(\Comment*[f]{resetting condition}){$t \geq t^\dag$}{
$X \coloneqq X_0$ \;
$ t \coloneqq t^\dag + t_r, \quad \text{where } t_r \sim \mathrm{Exp}(1/r)$\;
}
\uElse{
$X \coloneqq X + \Delta X, \quad \text{where }  \Delta X \sim \mathcal{N}(v \, \Delta t, 2D \, \Delta t)$ \;
$t \coloneqq \min \{t+ \Delta t, t^\dag \}$ \;
\uIf(\Comment*[f]{reflection condition}){$X \geq 1$}{
$X \coloneqq 2- X$ \;
}
}
\uIf{$X < \lambda$}{
$\mathbf{append}$ $X$ to array $\mathbf{X}^E$ \;
$\mathbf{append}$ $t$ to array $\mathbf{t}^E$ \;
}
}
\end{algorithm}

\newpage
\subsection{Euler-Maruyama algorithm}
\begin{algorithm}[!htb]
\caption{Euler-Maruyama algorithm for FPT} \label{alg:eulermaruyama}
\KwIn{$X_0,v,D,r,\Delta t$}
\KwOut{$\tau_r$}
$X \coloneqq X_0$\;
$t \coloneqq 0$ \;
$t^\dag \coloneqq t_r \sim \mathrm{Exp}(1/r)$ \;
\While{$X>0$}{
\uIf(\Comment*[f]{resetting condition}){$t \geq t^\dag$}{
$X \coloneqq X_0$ \;
$t \coloneqq t^\dag + t_r, \quad \text{where } t_r \sim \mathrm{Exp}(1/r)$ \;
}
\uElse{
$X \coloneqq X + \Delta X, \quad \text{where } \Delta X \sim \mathcal{N}(v \, \Delta t, 2D \, \Delta t)$ \;
$t \coloneqq \min \{t+ \Delta t, t^\dag \}$ \;
\uIf(\Comment*[f]{reflection condition}){$X \geq 1$}{
$X \coloneqq 2- X$ \;
}
\uElseIf(\Comment*[f]{first passage/stopping condition}){$X \leq 0$}{$\tau_r \coloneqq t$ \;}
}
}
\end{algorithm}

\newpage
\subsection{SMRA}
\label{sec:appendix3}

This code uses the $\texttt{multires}$ function found in Appendix \ref{AP:multiresolution_algorithm}. This algorithm generates a Brownian trajectory $\mathbf{X}_k$, $\mathbf{t}_k$ and outputs its corresponding FPT with the absorbing boundary up to an error threshold $\epsilon$.

Recall the simulation parameters $k^\dag$ discussed in Section \ref{subsubsec:4-1-2}. Parameter $k^\dag$ is the minimum resolution that the trajectory must have before resetting is allowed, while $k^*$ is the maximum resolution before the FPT is recorded and the algorithm is stopped. Note that $k^*>k^\dag$.

\vspace{5mm}

\begin{algorithm}[H]
\caption{SMRA}
\LinesNumbered
\KwIn{$x_0,v,D,r,\epsilon, k^*$}
\KwOut{$\tau_r$}
$t_0 \coloneqq 0$ \;
$t^\dag \coloneqq t' \sim \mathrm{Exp}(1/r)$ \;
$k^{\dag} \coloneqq - \left\lceil \dfrac{\log(\epsilon)}{\log(2)}\right\rceil$ \;
\While{$\delta_k>\epsilon \, \mathbf{or} \, k<k^*$}
{
$k \coloneqq 0$ \;
$B_{f} \coloneqq B' \sim \mathcal{N}(x_0+v(t^\dag-t_0),2D(t^\dag-t_0))$ \;
$\mathbf{B} \coloneqq \{B_{0}, B_{f}\}$ \;
$\mathbf{t} \coloneqq \{t_0,t^\dag \}$ \;
$\mathbf{X} \coloneqq \{x_0\}$ \;
\While{all $X \in \mathbf{X}> 0 \, \mathbf{or} \, k<k^\dag$}
{
$k \coloneqq k+1$ \;
$\delta_k \coloneqq \dfrac{t^\dag}{2^k}$ \;
$\mathbf{B}, \mathbf{t} \coloneqq \texttt{multires}(\mathbf{B}, \mathbf{t}, D, k, t^\dag)$ \;
\For(\Comment*[f]{reflection condition}){$j = 0 \; \mathbf{to} \; 2^k$}{
$\Delta B_j \coloneqq B_{j+1}-B_j$ \;
$X_{j+1} \coloneqq \mathrm{min}\left\{ X_{j} + \Delta B_{j}, \, 2- (X_{j} + \Delta B_{j}) \right\}$  \;
}
\uIf(\Comment*[f]{stopping condition}){any $X \in \mathbf{X} < 0 \, \mathbf{and} \, (\delta_k<\epsilon \, \mathbf{or} \, k>k^*)$}{
$\tau_r = \inf \{t \in \mathbf{t} \, \vert \, X_{t}<0 \}$ \;
$\mathbf{break}$ loops in lines $\mathbf{3}$ and $\mathbf{9}$
}
\uElseIf(\Comment*[f]{resetting condition}){$k>k^\dag$}
{
$t_0 \coloneqq t^\dag$ \;
$t^\dag \coloneqq t^\dag + t', t' \sim \mathrm{Exp}(1/r) $ \;
$\mathbf{break}$ loop in line $\mathbf{9}$ and $\mathbf{return}$ to line $\mathbf{3}$
}
\uElse(\Comment*[f]{increase resolution}){$\mathbf{return}$ to line $\mathbf{9}$}
}
}
\end{algorithm}

\newpage

\subsection{HMRA}
\label{sec:appendix4}

This code uses both the $\texttt{multires}$ function from Appendix \ref{AP:multiresolution_algorithm} and the $\texttt{euler}$ function from Appendix \ref{sec:appendix2}. The algorithm begins by generating an Euler-Maruyama trajectory $\mathbf{X}^E$ and $\mathbf{t}^E$ below the position threshold $\lambda$. Note that the reflections and resets have occurred in the initial Euler-Maruyama trajectory already.
\vspace{2mm}

\begin{algorithm}[H]
\caption{HMRA}
\LinesNumbered
\KwIn{$X_0,v,D,r,\epsilon, \Delta t, k^*, \lambda$}
\KwOut{$\tau_r$}
$t_0 \coloneqq 0$ \;
$k \coloneqq 0$ \;
$\mathbf{X}^E, \mathbf{t}^E \coloneqq \texttt{euler}(X_0,t_0,v,D,r,\Delta t, \lambda)$ \;
$\mathbf{X}^L \coloneqq \emptyset$ \;
$\mathbf{t}^L \coloneqq \emptyset$ \;
\For{$i = 0 \; \mathbf{to} \; \text{length of array } \mathbf{X}^E$}{
\uIf{$(t^E_{i+1}-t^E_i) \leq \Delta t$}{
$X_i^L \coloneqq \{X_i^E, X_{i+1}^E\}$ \;
$t_i^L \coloneqq \{t_i^E, t_{i+1}^E\}$ \;
}
}
\While{$\delta_k < \epsilon$}{
$k \coloneqq k +1$ \;
$\delta_k \coloneqq \dfrac{t^\dag}{2^k}$ \;
\For{$\ell = 0 \; \mathbf{to} \; \text{length of array } \mathbf{X}^L$}{
$X' \coloneqq X^L_\ell$ \Comment*[r]{note: $X^L_\ell = \{X^E_i, \ldots,  X^E_{i+1}\}_\ell$}
$t' \coloneqq t^L_\ell$ \Comment*[r]{note: $t^L_\ell = \{t^E_i, \ldots,  t^E_{i+1}\}_\ell$}
${X}^L_\ell, t^L_\ell \coloneqq \texttt{multires}(X', t', D, k, t^\dag)$ \;
}
$\mathbf{X}^M \coloneqq \text{flattened array of } \mathbf{X}^L$ \;
$\mathbf{t}^M \coloneqq \text{flattened array of } \mathbf{t}^L$ \;
\uIf(\Comment*[f]{stopping condition}){any $X \in \mathbf{X}^M < 0 \; \mathbf{and} \; (\delta_k<\epsilon \; \mathbf{or} \; k>k^*)$}
{
$\tau_r = \inf \{t \in \mathbf{t}^M \, \vert \, X^M_t<0 \}$ \;
$\mathbf{break}$ loop $\mathbf{10}$
}
\uElse(\Comment*[f]{increase resolution}){$\mathbf{return}$ to line $\mathbf{10}$}

}
\end{algorithm}
\vspace{2mm}
The loop at line 6 splits both $\mathbf{X}^E$ and $\mathbf{t}^E$ into an array of arrays $\mathbf{X}^L$ consisting consecutive elements of the original array, e.g. for $\mathbf{X}^E$: $\mathbf{X}^L = \{\{X^E_0, X^E_1\}, \{X^E_1, X^E_2\}, \ldots \}$. Each array element of both $\mathbf{X}^L$ and $\mathbf{t}^L$ is passed through $\texttt{multires}$ and afterwards, the array of arrays is flattened back to a 1D array which is called $\mathbf{X}^M$ and $\mathbf{t}^M$, to check for the first passage and the stopping condition.

\newpage

\end{appendices}

\end{document}